\newtheorem{theorem}{Theorem}[section]
\newtheorem{remark}[theorem]{Remark}
\definecolor {processblue}{cmyk}{0.96,0,0,0}
\DeclarePairedDelimiter\abs{\lvert}{\rvert}
\date{}
\begin{document}
	
	\title{Deciphering dynamics of recent COVID-19 outbreak in India: An age-structured modeling}
	
\author{  Vijay Pal Bajiya$^1$, Jai Prakash Tripathi$^1$
	,Ranjit Kumar Upadhyay$^2$\\
	\\
	\small $^1$Department of Mathematics, Central University of Rajasthan, Kishangarh-305817, Ajmer,
		India\\	 
		\small $^2$	Department of Mathematics \& Computing, Indian Institute of Technology (ISM), Dhanbad-826004, India \vspace{-0.04in}
}
\maketitle
\begin{abstract}
The transmission dynamics of an infectious disease are most sensitive to the social contact patterns in a population of a particular community and to analyze the precautions people use to reduce the transmission of the disease. The social contact pattern depends on the age distribution of the specific community via different location such as  work, school and recreation etc. Therefore, knowing the age-specific prevalence and incidence of the infectious disease is essential for modeling the future burden of the disease and the effectiveness of interventions such as vaccination.  In the present study, we consider an SEIR age-structured multi-group epidemic model to understand the impact of social contact patterns in controlling the disease. 
To observe how fluctuations in social mixing have affected the spreading of the emerging infectious disease, we used synthetic location-specific social contact matrices in the community. For mathematical analysis, we computed the basic reproduction number $(R_0)$ for the system and also illustrated the global behavior of the system in terms of basic reproduction number.
Further, the existence of optimal control for the associated problem has been established and computed mathematically. The transmission rate for the proposed model using the real data of COVID-19 for India from September 1, 2020, to December 31, 2020, has also been estimated.  
We simulated lifting of different non-pharmaceutical interventions by permitting the people to go back for their works in a phased-manners and investigated the effects of returning to work at different stages, accordingly. Our results suggest that awareness of symptomatic infected individuals of age groups $20-49$ years is beneficial to reduce the number of infected individuals when all schools are closed. However, awareness of symptomatic infected individuals of school children age groups also plays a significant role in reducing disease cases when some schools are partially opened. The simulation results also recommend that the number of cases could be reduced in large numbers by controlling the contacts at school and other gathering places.
Interestingly, it has been investigated that the time-dependent transmission rate is more realistic rather than the constant spread rate to COVID-19 for India via estimating transmission rate using the least square method.  Our study suggests that the early and sudden lifting of control measures could lead to other peaks and a high COVID-19 burden, which could be flattened and reduced by relaxing the interventions gradually. We hope that our results would help health policymakers in deciding appropriate and timely age-based vaccination distribution strategies and, therefore, control the disease.
\end{abstract}
\noindent{ \emph{\textbf{Keywords}}: Social contact matrix, Age-structured epidemic model, Basic reproduction number, Optimal control, COVID-19, Parameter estimation.}

\section{Introduction}
Mathematical modeling of infectious disease spreading has become an essential tool for understanding disease dynamics and outbreak patterns. It also provides crucial insights to the policymakers in making timely decisions to control and reduce the burden of the diseases when limited empirical data are available. The accomplishment of mathematical modeling in advising critical decisions to  human shield has been confirmed for various diseases, including pandemic \textit{Influenza, COVID-19, Flu}, etc. \cite {gra2008, gog2020,ves2020,tho2020}.  Since the infection ability of an infected individual to transmission rapidly and disturb many people in a community. Therefore, infectious diseases (for example, \textit{Influenza, COVID-19}, etc.) directly communicated from individual to individual by the respiratory system have been of particular interest for mathematical modeling.  The importance of epidemic models and the usefulness of policies based on these models are dependent on the robustness of the model parameters, which capture different features of the disease \cite{gra2008, due2007}. The key parameter in epidemic modeling is the probability of successful contact between an infectious and a susceptible individual to spread the infection. The accurate contact structures in the epidemic modeling improve the accuracy and efficiency of the prediction and allow us to investigate the effect of control measures targeted at specific locations, such as schools, workplaces, or homes. Therefore, social contact structures of people are the important factors in epidemic modeling to understand the transmission dynamics and investigate the control measures of infectious diseases \cite{pel2009,bec1995}.

Several assumptions are mandatory to make the range of human relations straightforward into controllable mathematical models of communicable diseases transmitted from one person to another person. The essential assumption of a homogeneous mixing population (in which each individual has an equal probability of contact to infected person) has been transformed by the different realistic frameworks, in which the likelihood of contact fluctuates between different groups, which may be defined by the age of individuals. The extent to which individuals specially mix with people of the same age (assortativeness mixing) is a crucial heterogeneity that is now routinely included in epidemic modeling. The heterogeneity has also been made to further represent the fundamental structure of social contact patterns by breakdown the population according to their different locations such as households, schools, workplaces, and other places \cite{pel2009,kie2020}. For the directly transmitted respiratory virus (\textit{COVID-19, Measles, and Influenza}), social mixing patterns influence the risk of individual-level transmission of diseases \cite{war2018, laf2020} and population-level infection dynamics \cite{kuc2014}, as well as the effectiveness of age-specific control measures \cite{bag2013}. The government of India declared a state of emergency and announced the nationwide lockdown and shut down schools. Sports facilities, non-food shops, restaurants, shopping malls, and traveling on people are prohibited from March 25, 2020 (Phase-1 lockdown).  The authority has also instructed to follow the strict distancing measures, avoid unnecessary social interactions, and the local people to stay at home except for essential works. mass gathering was also prohibited, and therefore, social contacts were also reduced significantly. During the lockdown lifting, social activities such as market opening and traveling with social distancing and face-masks were allowed step by step. The effect of lockdown on the progression of COVID-19 in India  has been studied in \cite{bug2020,baj2020}. Thus, social contact patterns essentially play a significant role in understanding the disease progression.

The changes in social mixing patterns among people of various age groups and age distribution affect the prevalence of each age group. The child age-group individuals are subject to make more unnecessary social contacts than adults due to their unawareness  \cite{mos2008} and hence, children may pay more contribution to disease transmission than adults \cite{cau2008, eam2012}. In the early stage of COVID-19 in India, $75\%$ of confirmed cases belong to the actively working population between the age group of 21 to 60 years \cite{moh2020}. The primary reason for this is that the maximum number of persons in those age groups travel internationally for their jobs and businesses. They were also working in essential services during the lockdown period in India.  Therefore, the number of cases or prevalence of the disease depends strongly on the role of different age groups and the mixing patterns. Other age distributions could show significantly different epidemic shapes and the overall impact on the transmission dynamics of infectious diseases.  School closures are considered an essential intervention for an epidemic due to the higher rate of unnecessary social contact in children. Thus, the impact of school closure depends on children's contribution to disease transmission. The early cases of SARS-CoV-2 in Wuhan, China, were concentrated in adults over 40 years of age due to skewed age distribution in China. The assortative mixing patterns between adults could have condensed infection transmission to children in China's very early stages of the COVID-19 outbreak. In many countries (outside China), COVID-19 outbreaks may have been initiated by working-age travelers entering the country \cite{ sal2010} producing a similar excess of adults in the early phases of local epidemics. In both cases, the school closures that happened potentially further declined the transmission among children, but to what degree is unclear.

The heterogeneity of the infected individuals and severity according to their age groups, especially for children and older people, encouraged the interest of several researchers \cite{cao2020,pre2020,baj2021,jos2020}. Some studies have exposed that the severity of infectious diseases increases with the age and morbidity of hospitalized patients \cite{kel2020,zho2020,ver2020}.  Wu et al. \cite{jos2020} have estimated that the global symptomatic case fatality risk of Coronavirus in Wuhan was $1.4\%$, which is significantly lower than the corresponding crude confirmed case fatality risk of $4.5\%.$ Their findings also suggest that the risk of showing symptoms rises by $4\%$ per year in adults who have age groups between $30$ and $60$ years \cite{jos2020}. Davies et al. \cite{dav2020} considered an age-structured mathematical model to examine the epidemic data of COVID-19 from China, Italy, Japan, Singapore, Canada, and South Korea. Their study recommends that there is a robust connection between the age of an individual and the probability of showing symptoms. 

The study of Zou et al. \cite{zou2020} has shown that the viral load in the asymptomatic cases was very similar to that in the symptomatic cases in many situations. Jones et al. \cite{jon2020} established that viral loads in infected children with age group $0-15$ years do not fluctuate significantly from those of older age people. However, older age people are more likely to develop symptoms due to their immunity.

The existing works of the literature suggest that examination of the transmission dynamics between two generations is essential to a better understanding of COVID-19 transmission and most fundamental to studying the mitigation interventions of COVID-19 efficiently. Therefore, social contact patterns among age groups in the community (like work, school, home, and other locations) are a fundamental factor to consider while modeling the transmission of the COVID-19 pandemic. To incorporate these factors, various mathematical models have been established \cite{pre2020,dav2020,ayo2020,chi2020}. Chikina and Pegden \cite{chi2020} studied an age-structured model to explore age-targeted mitigation interventions. They use the numerical values of the parameter from the literature and make discussion using the age-structured temporal series to fit their considered model. However, Davies et al. \cite{dav2020} also illustrate age-related effects in controlling the COVID-19 pandemic. The authors also discuss the efficiency of different control measures by using statistical inference to fit an age-structured SIR model output to empirical data.

In the present study, we use synthetic location-specific contact patterns (contact matrices) for India to observe how these fluctuations of social mixing among different age groups have affected the disease outbreak progression. We modify these in the case of school and workplace closures, reduction in mixing of the population in the general community, and some other intervention scenarios. Using these matrices and the estimated epidemiological parameters for the COVID-19 outbreak in India \cite{dav2020,pre2017,lau2020}, we estimate the infection rate $\beta$ according to lockdown lifting scenarios and simulates the trajectory of COVID-19 outbreak in India through an age-structured (a multi-group) $SEI^aI^sR$ epidemic model for various types of intervention scenarios. The main findings of present work are (i) awareness of the symptomatic infected individuals of age groups $20-49$ years has been an important control measure to COVID-19 in India (ii) the number of cases could be reduced in large number by controlling the contacts at school and other gathering places (iii) the time-dependent transmission rate may be more realistic rather than the constant rate to COVID-19 for India.  

The rest of the paper is formulated as follows. In Section \ref{mod_form}, we formulate an $SEI^aI^sR$ multi-group epidemic model incorporating the social mixing pattern to quantify the control measures of the disease. The computation of basic reproduction number $(R_0)$, local and global stability of equilibria in terms of $R_0$ are discussed in Section \ref{mat_anal}. We also formulated and computed the optimal control problem in Section \ref{opti_1} and \ref{opti_2}. In Section \ref{num_simu}, we presented the numerical simulations to the case study of COVID-19 in India, including the model parameter estimation. We also discussed the effects of awareness of the symptomatic infected individuals and various scenarios of social mixing patterns.  Finally, we discuss the findings and conclude our work in Section \ref{con}.
\section{Model Formulation }\label{mod_form}
In the case of a homogeneous mixing population, We can divide the total population $(N)$ into five main epidemiological classes, susceptible $(S)$, exposed $(E)$, asymptomatic infected $(I^a)$, symptomatic infected $(I^s)$ and recovered $(R)$. Thus, we can achieve the following system of ordinary differential equations to
describe the infectious disease dynamics.

\begin{equation}\label{model}
\begin{aligned}
\frac{dS(t)}{dt}=&\Gamma-\beta S(t) \lambda_1\left(\frac{I^a(t)+(1-\eta) I^s(t)}{N(t)}\right)-\mu S,\\
\frac{dE(t)}{dt}=&\beta S(t) \lambda_1\left(\frac{I^a(t)+(1-\eta) I^s(t)}{N(t)}\right)-(\alpha+\mu)E(t),\\
\frac{dI^a(t)}{dt}=&(1-\sigma)\alpha E(t)-\left(\gamma^a+\delta^a+\mu\right)I^a(t),\\
\frac{dI^s(t)}{dt}=&\sigma\alpha E(t)-\left(\gamma^s+\delta^s+\mu\right)I^s(t),\\
\frac{dR(t)}{dt}=&\gamma^aI^a(t)+\gamma^sI^s(t)-\mu R(t),
\end{aligned}
\end{equation}
where $N(t) = S(t) + E(t) + I^a(t) + I^s(t) + R(t).$ $\Gamma$ represents the new recruitment in the susceptible population and $\mu$ be the natural death rate of individuals. $\alpha$ is the rate at which the exposed are fetching infectious individuals,
and $\sigma$ represents the proportion of symptomatic infectious in total infectious individuals. $\delta^a$ and $\delta^s$  are the mortality rates due to the disease of asymptomatic and symptomatic infectious individuals, respectively. $\gamma^a$ and $\gamma^s$ are the recovery rates of asymptomatic and symptomatic infectious
individuals, respectively. Assuming that the symptomatic infectious individuals avoid unnecessary contact with others (by self-isolation and other precautions), the transmission of disease by the symptomatic infectious individuals is reduced than asymptomatic infectious individuals. Therefore, $\eta$ represents the proportion of contacts avoided by symptomatic infectious individuals due to their awareness. Thus, $\eta$ also represents the awareness of symptomatic infectious individuals. $\lambda_1$ be the average number of contact between the susceptible and infectious individuals, and $\beta$ represents the probability of infection on the contact between infectious and susceptible individuals.\\
During the initial phases of the infectious disease, disease spreading between individuals is statistically independent, which means the probability of making contact between an infectious individual and someone no longer susceptible is very low. in, Epidemiology, the basic reproduction number $R_0$ is used to forecast the curve of an epidemic such that disease will eliminate when $R_0<1$ and persist when $R_0>1$. $R_0$ represents the number of people that an infectious is expected to infect and can be calculated by next-generation matrix approach \cite{van2002}. The basic reproduction number $(R_0)$ for the model system \eqref{model} is given by.
\begin{equation}\label{R01}
R_0=\frac{\beta\lambda_1\alpha(1-\sigma)}{(\alpha+\mu)(\gamma^a+\delta^a+\mu)}+\frac{\beta\lambda_1\alpha(1-\eta)\sigma}{(\alpha+\mu)(\gamma^s+\delta^s+\mu)}
\end{equation}
where $\frac{\beta\lambda_1\alpha(1-\sigma)}{(\alpha+\sigma)(\gamma^a+\delta^a+\mu)}$ and $\frac{\beta\lambda_1\alpha(1-\eta)\sigma}{(\alpha+\mu)(\gamma^s+\delta^s+\mu)}$ represent the contribution of asymptomatic and symptomatic infectious individuals in the disease transmission, respectively.\\
In general, it must be noted that the value of the basic reproduction number is not a biological constant. However,  the basic reproduction number depends on various epidemiological factor like social contact patterns of individuals, applied interventions to control the disease, etc. Also, the number $R_0$ of the disease usually depends on newly susceptible populations. In particular, a value of $R_0 > 1$ specifies that the infectious disease will initiate to spread in the population if there does not exist any intervention. However, a higher value of $R_0$  represents the faster exponential growth of infection in a community. For example, measles is known to be one of the most contagious diseases, with $12\leq  R_0\leq 18$ \cite{fia2017}. The CDC determined that COVID-19 has an $R_0$ approximately value of $5.7$ for the United States \cite{ste2020}. It is near to that of Polio and Rubella \cite{Liu2021}. For many infectious diseases such as COVID-19, the mortality rate due to disease does not distribute homogeneously. However, it varies according to the age groups of individuals \cite{dav2020}.\\
For many diseases, such as COVID-19, the impact of the different age groups on the disease transmission dynamics varies drastically. It may happen due to different contact patterns among age groups and different mixing patterns. Therefore, we formulate the transmission dynamics model for an outbreak of the disease in the heterogeneously mixing population to quantify the impact of the contact patterns and age of individuals. We consider an age-structured epidemic model (multi-group epidemic model) in which each group of a compartment represents the dynamics of individuals of the age groups like $0-04, 05-09, 10-14, 15-19, \cdots, 75-79.$ That means $S_i, E_i, I^a_i, I^s_i$ , and $R_i$ represent the number of susceptible, exposed, asymptomatic, symptomatic, and recovered individuals of $i^{th}$ age group. In this age-structured epidemic model, we also incorporate the contact patterns according to their age groups. It is described by the social contact matrix, which contains the average number of contact between each pair of age groups. Therefore, the force of infection for age-structured epidemic model takes the following form:

\begin{equation*}
\text{force of infection}=\beta S_i(t) \sum_{j=1}^{n}\lambda_{ij}\left(\frac{I^a_j(t)+(1-\eta_j) I^s_j(t)}{N_j(t)}\right)
\end{equation*}
where $N_j = S_j + E_j + I^a_j + I^s_j + R_j$ be the total population of $j^{th}$ group, $\lambda_{ij}$ be the $ij^{th}$ element of the social contact matrix which represents the number of contact between $i^{th}$ and $j^{th}$ age groups, and $\beta$ is the probability of infection of that contact. Here, it is assumed that symptomatic infectious individuals reduce their contacts compared to asymptomatic individuals, i.e., $\eta_j$ is the proportion of contacts avoided by these self-isolating individuals of $j^{th}$ age group (also allowing for compliance rates). It shows the awareness of symptomatic infectious individuals. 
Thus, by incorporating the social contact matrix (contact patterns), the model system \eqref{model} takes the form of following age-structured epidemic model (multi-group epidemic model) for the $i^{th}$ age group.

\begin{equation}\label{mod}
\begin{aligned}
\frac{dS_i(t)}{dt}=&\Gamma_i-\beta S_i(t) \sum_{j=1}^{n}\lambda_{ij}\left(\frac{I^a_j(t)+(1-\eta_j) I^s_j(t)}{N_j(t)}\right)-\mu_i S_i,\\
\frac{dE_i(t)}{dt}=&\beta S_i(t) \sum_{j=1}^{n}\lambda_{ij}\left(\frac{I^a_j(t)+(1-\eta_j) I^s_j(t)}{N_j(t)}\right)-(\alpha_i+\mu_i)E_i(t),\\
\frac{dI^a_i(t)}{dt}=&(1-\sigma)\alpha_iE_i(t)-\left(\gamma^a_i+\delta^a_i+\mu_i\right)I^a_i(t),\\
\frac{dI^s_i(t)}{dt}=&\sigma\alpha_iE_i(t)-\left(\gamma^s_i+\delta^s_i+\mu_i\right)I^s_i(t),\\
\frac{dR_i(t)}{dt}=&\gamma^a_iI^a_i(t)+\gamma^s_iI^s_i(t)-\mu_iR_i(t),
\end{aligned}
\end{equation}
where $i = 1, 2, \cdots, n$ and $n$ represents the number of age-groups  in the proposed model system. The biological meaning of parameters $\Gamma_i,\mu_i,\alpha_i,\sigma,\gamma^a_i,\gamma^s_i,\delta^a_i$ and $\delta^s_i$ for $i^{th}$ age-group are same as described for model \eqref{model}.
\begin{figure}[H]
	\centering
	\includegraphics[width=10cm,height=12cm]{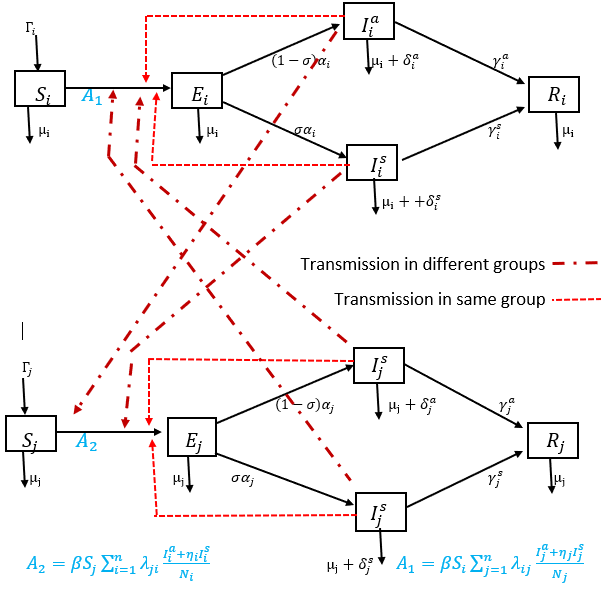}
	\caption{The schematic diagram for proposed model system \eqref{mod} which represents the transmission and transition between different compartments and groups. }
\end{figure}
\begin{remark}
	We can ignore the demographics of the population (i.e., new recruitment in susceptible and natural deaths equal to zero) for the disease for which the infectious period is very shorter compared with the lifespan of an individual of the considered population. For example, in the case of the COVID-19 pandemic, we can ignore the demographics as the infectious period ($7-14$ days) is shorter than the lifespan of an individual (approximately $70$ years). For this purpose, we can put $\Gamma_i=\mu_i=0$ for all $i=1,2,\cdots,n$ in our proposed model system \eqref{mod}.
\end{remark}

\subsection{Characterization of the Social Contact Matrix}
To determine the effect of different interventions, we divide the age-structured contact matrix $\lambda= (\lambda_{ij})_{n\times n}$ into the contributions from home location, schools (all educational institutions like schools, colleges, universities, different types of coaching, etc.), workplaces (govt. and private offices, other workplaces), and other locations (markets, cinema halls, restaurants shopping malls, etc.), given by $\lambda^h, \lambda^s, \lambda^w$ and $\lambda^o$ respectively. The weight of each contact matrix is given by the coefficients $\alpha_h, \alpha_s,\alpha_w$, and $\alpha_o$, we can change the numerical value of these coefficients between zero and one overtime to reflect the effect of different scenarios of interventions. If any of these locations do not contribute to the disease transmission, we set the weight coefficient for the specific contact matrix equal to zero. For a partial contribution, we set the weight coefficient between 0 and 1, accordingly. Therefore, we can divide the contact matrix $\lambda_{ij}$ in four different social contact matrices with corresponding weight coefficients such that
\[\lambda_{ij}=\alpha_h\lambda^h_{ij}+\alpha_s\lambda^s_{ij}+\alpha_w\lambda^w_{ij}+\alpha_o\lambda^o_{ij}\]
For example, schools have been shut down during the strict lockdown in India, so that, we can set $\alpha_s= 0$ for this duration. It must also be distinguished that the contributions to the work locations and other areas like markets are never zero ( $\alpha_w\neq 0$ and  $\alpha_o\neq 0$ ) even during a strict lockdown because people were involved in essential services and markets were opened functioning to a reduced degree. The contributions of contacts at the home location are never zero. Moreover, it could also be noticed that the different types of lockdown can encourage an increase in contact at the home location because people are staying at home more. The contributions of contacts at the home location can not be zero. The selection of weights of social contact patterns also discussed in Moosong et al. \cite{mos2008}.  Data for the number of contacts
made between individuals of each age class are obtained
from an empirical study \cite{pre2017} which estimated contacts separately
for ‘home’, ‘work’, ‘school’ and ‘other’ environments for
5-year age classes up to age 80.

\section{Mathematical Analysis}\label{mat_anal}
In this section, we emphasize the existence, positivity, and boundedness of solutions of model system \eqref{mod}. We also investigate the global dynamics around the equilibria of the system.
\subsection{Nonnegativity and Boundedness}
Due to the biological feasibility of solutions of system \eqref{mod}, we are only interested in nonnegative and bounded solutions of the system. Therefore, we will show that all the solutions of the model system \eqref{mod} with nonnegative initial conditions are nonnegative i.e.,  $S_i(t)\geq  0, E_i(t) \geq 0, I^a_i(t)\geq0, I^s_i(t)\geq 0, R_i(t)\geq 0$ for all $t\geq0$ and $i = 1, 2, \cdots , n.$

\begin{theorem}\label{pos_theo}
The solutions of model system \eqref{mod} with the positive initial condition are nonnegative for all $t\geq0$ and uniformly bounded in the feasible region $\Omega.$
\end{theorem}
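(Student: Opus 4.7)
The plan is to handle nonnegativity and boundedness separately, treating the $5n$ ODEs as a coupled system but exploiting the simple structure of each equation on the coordinate hyperplanes.

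For nonnegativity, I would invoke the standard tangent/inflow argument: show that on each face of the nonnegative orthant $\mathbb{R}_+^{5n}$, the vector field of \eqref{mod} points inward (or at least tangentially). Equivalently, for each state variable $X \in \{S_i, E_i, I^a_i, I^s_i, R_i\}$, I would verify that whenever $X = 0$ and all other coordinates are nonnegative, the corresponding time derivative is $\geq 0$. Concretely, if $S_i = 0$ then $\dot S_i = \Gamma_i \geq 0$; if $E_i = 0$ then $\dot E_i = \beta S_i \sum_j \lambda_{ij}(I_j^a + \eta_j I_j^s)/N_j \geq 0$; if $I_i^a = 0$ then $\dot I_i^a = (1-\sigma)\alpha_i E_i \geq 0$; if $I_i^s = 0$ then $\dot I_i^s = \sigma \alpha_i E_i \geq 0$; and if $R_i = 0$ then $\dot R_i = \gamma_i^a I_i^a + \gamma_i^s I_i^s \geq 0$. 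A cleaner alternative is the integrating-factor trick: rewrite, for example, $\dot S_i \geq -\bigl[\mu_i + \beta \sum_j \lambda_{ij}(I_j^a + \eta_j I_j^s)/N_j\bigr] S_i$ and conclude $S_i(t) \geq S_i(0)\exp(-\int_0^t(\cdots)\,ds) \geq 0$, with analogous inequalities for the other compartments processed in the order $E_i, I_i^a, I_i^s, R_i$ once the preceding variables are known to be nonnegative.

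For boundedness, I would work group-by-group. Let $N_i = S_i + E_i + I_i^a + I_i^s + R_i$; adding the five equations of group $i$ in \eqref{mod} yields
\begin{equation*}
\dot N_i \;=\; \Gamma_i - \mu_i N_i - \delta_i^a I_i^a - \delta_i^s I_i^s \;\leq\; \Gamma_i - \mu_i N_i.
\end{equation*}
A standard comparison/Gronwall argument then gives $N_i(t) \leq N_i(0) e^{-\mu_i t} + (\Gamma_i/\mu_i)(1 - e^{-\mu_i t})$, so $\limsup_{t\to\infty} N_i(t) \leq \Gamma_i/\mu_i$ and $N_i(t) \leq \max\{N_i(0), \Gamma_i/\mu_i\}$ for all $t \geq 0$. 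Since each compartment is dominated by $N_i$, uniform boundedness of every state variable follows. The natural feasible region is
\begin{equation*}
\Omega = \Bigl\{(S_i,E_i,I_i^a,I_i^s,R_i)_{i=1}^{n}\in\mathbb{R}_+^{5n}\;:\; N_i \leq \tfrac{\Gamma_i}{\mu_i},\ i=1,\dots,n\Bigr\},
\end{equation*}
which the above inequality shows is positively invariant and globally attracting. Local existence/uniqueness of solutions follows from the fact that the right-hand side of \eqref{mod} is locally Lipschitz on the open set where $N_j > 0$, and the a priori bounds then extend solutions to all $t \geq 0$.

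I do not anticipate a genuine obstacle here; the proof is a routine invariance-plus-comparison argument. The only mild subtlety is that the force-of-infection term contains $1/N_j$, which could in principle blow up if some $N_j$ were to reach zero. This is harmless because the equation for $N_j$ gives $\dot N_j \geq \Gamma_j - \mu_j N_j - (\delta_j^a + \delta_j^s) N_j$, so $N_j$ stays bounded below by a positive constant depending on $N_j(0)$, ensuring the denominator never vanishes and the vector field remains well-defined and locally Lipschitz throughout the positively invariant region $\Omega$.
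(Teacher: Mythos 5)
Your proposal is correct and follows essentially the same route as the paper: nonnegativity via the boundary/first-crossing argument (with the integrating-factor bound for the remaining compartments) and boundedness by summing each group's equations to get $\dot N_i \leq \Gamma_i - \mu_i N_i$ and applying a comparison argument, yielding the same absorbing region $\Omega$ with $N_i \leq \Gamma_i/\mu_i$. Your added remarks on the $1/N_j$ denominator and solution extension are careful touches the paper omits, but they do not change the argument.
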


\begin{proof}
	First, we show that $S_i(t) > 0$ for all $t> 0.$ For this, we assume that there exists a $t^s_i> 0$ such that, $S_i(t^s_i) = 0$ and $S_i(t) > 0$ for $0 < t < t^s_i.$ Thus, the first equation of system \eqref{mod} becomes
	\begin{equation*}
	\frac{dS_i(t^s_i)}{dt}=\Gamma_i-\beta S_i(t^s_i) \sum_{j=1}^{n}\lambda_{ij}\left(\frac{I^a_j(t^s_i)+(1-\eta_j) I^s_j(t^s_i)}{N_j}\right)-\mu_i S_i(t^s_i)=\Gamma_i>0.
	\end{equation*}
	Therefore, $S_i(t)<0$ for $t\in\left(t^s_i-\epsilon^s,t^s_i\right)$ and sufficiently small positive $\epsilon^s.$ This is a contradiction to $S_i(t) > 0$ for $0 < t < t^s_i.$ Hence, $S_i(t) > 0$ for all $t> 0,$ where $i=1,2,\cdots,n.$ Next, we prove that $E_i(t) > 0$ for $i=1,2,\cdots,n$ and all $t> 0.$ By theory of differential equations, we obtain that the solution of second equation of system \eqref{mod} is given by
	\begin{equation*}
	E_i(t)\geq E_i(0)e^{-(\alpha_i+\mu_i)t}
	\end{equation*}
	Therefore, we have that $E_i(t)\geq 0$ for $t\geq 0.$ In the similiar  manner, it can easily  be proved that $I^a_i(t)\geq 0$, $I^s_i(t)\geq 0$ and $R_i(t)\geq 0$ for $t\geq 0.$\\
	Further, we show the boundedness of solutions of model system \eqref{mod}. From model system \eqref{mod}, we have 
	\[\frac{d}{dt}(S_i+E_i+I_i^a+I_i^s+R_i)=\lambda_i-\mu_i(S_i+E_i+I_i^a+I_i^s+R_i)-\delta^a_i I_i^a-\delta^s_i I_i^s\]
	\[\frac{dN_i}{dt}\leq \Lambda_i-\mu_iN_i \,\,\,\,\text{for}\,\,i^{th} group.\]
	Sinec, we obtain $\lim_{t\to \infty}sup N_i(t) \leq \frac{\Lambda_i}{\mu_i}$. Hence, we have $$\Omega=\left\{\left(S_i,\,E_i,\,I^a_i,\,I^s_i,\,R_i\right)\in \mathbb{R}_+^5|0<S_i+E_i+I^a_i+I^s_i+R_i\leq \frac{\Lambda_i}{\mu_i}\right\}$$ \\
	Thus, the solutions of model system \eqref{mod} is bounded. By the proof procedure of Theorem \ref{pos_theo}, we know that all solutions of system \eqref{mod} ultimately come in and continue in the region $\Omega.$ Hence $\Omega$ is a bounded absorbing set for system \eqref{mod}. Thus, this completes the proof of Theorem \ref{pos_theo}.
\end{proof}

\subsection*{Existence of Solutions}
\begin{theorem}
	The model system \eqref{mod} has a unique solution when initial conditions for all the variables are
	nonnegative i.e., $S_i(0)\geq0, E_i(0) \geq0, I^a_i(0)\geq 0, I^s_i(0)\geq0, R_i(0) \geq 0$ for all $i = 1, 2, \cdots, n.$
\end{theorem}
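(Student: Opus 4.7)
The plan is to recast the system \eqref{mod} as a first-order vector ODE $\dot X(t)=F(X(t))$ on the state space $\mathbb{R}^{5n}$, with $X=(S_1,E_1,I_1^a,I_1^s,R_1,\dots,S_n,E_n,I_n^a,I_n^s,R_n)$, and then invoke the standard Picard–Lindelöf (Cauchy–Lipschitz) theorem for local existence and uniqueness, followed by the boundedness result already established in Theorem \ref{pos_theo} to extend the solution globally in forward time. First I would identify the open domain
\[
\mathcal D=\Bigl\{X\in\mathbb{R}^{5n}_{\ge 0}\;:\;N_j=S_j+E_j+I_j^a+I_j^s+R_j>0\ \text{for all } j=1,\dots,n\Bigr\},
\]
on which every component $f_k$ of $F$ is a rational function in the state variables whose denominators $N_j$ are strictly positive. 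Consequently $F$ is $C^\infty$, hence continuously differentiable and locally Lipschitz, on $\mathcal D$.

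Next I would appeal to the Picard–Lindelöf theorem (see, e.g., \cite{dip2009,bir1989}) which guarantees, for each initial condition $X(0)\in\mathcal D$, the existence of a unique maximal solution $X(t)$ defined on some maximal interval $[0,t_{\max})$ with $t_{\max}\in(0,\infty]$. The two things that must be ruled out to conclude $t_{\max}=\infty$ are blow-up of $\|X(t)\|$ and exit from $\mathcal D$ through the hypersurfaces $\{N_j=0\}$. Blow-up is excluded by the boundedness argument in the proof of Theorem \ref{pos_theo}, which shows $N_i(t)\le\max\{N_i(0),\Lambda_i/\mu_i\}$ on the lifetime of the solution. Exit through $\{N_j=0\}$ is excluded by first noting, via the nonnegativity argument of Theorem \ref{pos_theo}, that each component stays nonnegative, and then observing $\tfrac{d}{dt}N_j\ge \Gamma_j-\mu_jN_j-(\delta_j^a+\delta_j^s)N_j$, which by a comparison/Grönwall argument yields $N_j(t)\ge N_j(0)e^{-(\mu_j+\delta_j^a+\delta_j^s)t}>0$ for all $t\in[0,t_{\max})$ whenever $N_j(0)>0$.

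The only case that then requires separate handling is when some $N_j(0)=0$ (i.e., the entire $j^{\text{th}}$ age class is empty at $t=0$). Because of the recruitment term $\Gamma_j>0$ in the $S_j$-equation, $S_j(t)$ becomes strictly positive for all $t>0$, so the singular locus is left instantaneously; on any interval $[\varepsilon,T]$ with $\varepsilon>0$ the previous argument applies, and on $[0,\varepsilon]$ one obtains a unique solution by observing that the $S_j$-equation decouples (the force-of-infection term vanishes when $N_j=0$) and is a scalar linear ODE, while all other classes start and remain at zero by an elementary a priori bound. Gluing these two pieces gives a unique global solution.

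The main obstacle I anticipate is the $N_j$-in-the-denominator issue, which formally violates the hypotheses of Picard–Lindelöf on the closed nonnegative orthant. The plan above circumvents it by working on the open set $\mathcal D$ where $F$ is smooth, and by using the inflow $\Gamma_j$ together with the nonnegativity and boundedness conclusions of Theorem \ref{pos_theo} to certify that trajectories neither hit the singular boundary nor escape to infinity, thereby upgrading local Picard–Lindelöf existence to a unique global solution on $[0,\infty)$.
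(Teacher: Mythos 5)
Your proposal is correct in its main thrust, but it takes a genuinely different route from the paper. The paper's Appendix~A proof rewrites the system as $\Phi(X)=AX+B(X)$, with $A$ the block-diagonal linear part and $B$ the incidence terms, derives a \emph{global} Lipschitz bound for $B$ (using $S_i,I^a_j,I^s_j\le \Gamma/\mu$ and, notably, treating the denominators $N_j$ as if they were fixed quantities), and then cites the Picard-type existence theorem of \cite{bir1989}; the singular set $\{N_j=0\}$ is never isolated or discussed. You instead exploit only \emph{local} Lipschitz continuity of the full vector field on the open set where every $N_j>0$, apply Picard--Lindel\"of there, and convert local into global well-posedness via a continuation argument based on the a priori estimates of Theorem~\ref{pos_theo} (nonnegativity, $N_i\le\max\{N_i(0),\Gamma_i/\mu_i\}$) together with the exponential lower bound $N_j(t)\ge N_j(0)e^{-(\mu_j+\delta^a_j+\delta^s_j)t}$ that keeps trajectories away from the singular hypersurfaces. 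Your version is arguably the more careful one, since the state-dependence of the denominators is exactly the point the paper's Lipschitz computation glosses over, and your continuation step makes explicit why boundedness plus positivity of the $N_j$ yields a solution on all of $[0,\infty)$ rather than merely on some interval.

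The one soft spot is your handling of the degenerate case $N_j(0)=0$. At such a point the incidence term is a $0/0$ expression, so the system \eqref{mod} is simply not defined there: the assertion that ``the force-of-infection term vanishes when $N_j=0$'' is a modelling convention, not a consequence, and uniqueness on $[0,\varepsilon]$ cannot be obtained from Picard--Lindel\"of because the field is neither defined nor Lipschitz at the initial point, so your gluing argument as stated has a gap. The cleanest fix is either to restrict the theorem to initial data with $N_j(0)>0$ for all $j$ (which is what the paper implicitly assumes, and what its statement of nonnegative initial data is meant to cover in practice), or to state explicitly the continuous extension of the incidence term you adopt on $\{N_j=0\}$ and verify local Lipschitz continuity of that extension before invoking uniqueness. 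With that caveat addressed, your argument stands as a complete and somewhat stronger alternative to the paper's proof.
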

\begin{proof}
	Let $X=\left(\begin{array}{c}
	X_1\\X_2\\\vdots\\X_n
	\end{array}\right)$ where $X_i=\left(\begin{array}{c}
	S_i(t)\\E_i(t)\\I^a_i(t)\\I^s_i(t)\\R_i(t)
	\end{array}\right)$  for all $i=1,2,\cdots,n.$ So, the model system \eqref{mod} can be written in the following form 	$\Phi(X)=AX+B(X),$ where
	\begin{equation*}A=\left(
	\begin{array}{cccc}
	A_1&0&\cdots&0\\
	0&A_2&\cdots&0\\
	\vdots&\vdots&\vdots&\vdots\\
	0&0&\cdots&A_n
	\end{array}\right)
	\end{equation*}
	with
	\begin{equation*}A_i=\left(
	\begin{array}{ccccc}
	-\mu_i&0&0&0&0\\
	0&-(\alpha_i+\mu_i)&0&0&0\\
	0&0&-(\gamma^a_i+\delta^a_i+\mu_i)&0&0\\
	0&0&0&-(\gamma^s_i+\delta^s_i+\mu_i)&0\\
	0&0&0&0&-\mu_i
	\end{array}\right)
	\end{equation*}
	and  $B(X)=\left(\begin{array}{c}
	B(X_1)\\B(X_2)\\\vdots\\B(X_n)
	\end{array}\right)$ with  $B(X_i)=\left(\begin{array}{c}
	\Gamma_i-\beta S_i(t)\sum_{n}^{j=1}\lambda_{ij}\frac{I^a_j+(1-\eta_j) I^s_j}{N_j}\\\beta S_i(t)\sum_{n}^{j=1}\lambda_{ij}\frac{I^a_j+(1-\eta_j) I^s_j}{N_j}\\0\\0\\0
	\end{array}\right).$
	The functions $B(X_i)$ satisfies
	\begin{equation*}
	\begin{aligned}
	\abs*{B(X^1_i)-B(X^2_i)}=&\abs*{\beta S^1_i(t)\sum_{n}^{j=1}\lambda_{ij}\frac{I^{a1}_j+(1-\eta_j) I^{s1}_j}{N_j}-\beta S^2_i(t)\sum_{n}^{j=1}\lambda_{ij}\frac{I^{a2}_j+(1-\eta_j) I^{s2}_j}{N_j}}\\
	\leq&\abs*{\beta S^1_i(t)\sum_{n}^{j=1}\lambda_{ij}\frac{I^{a1}_j+\eta I^{s1}_j}{N_j}-\beta S^1_i(t)\sum_{n}^{j=1}\lambda_{ij}\frac{I^{a2}_j+(1-\eta_j) I^{s2}_j}{N_j}}\\
	&+\abs*{\beta S^1_i(t)\sum_{n}^{j=1}\lambda_{ij}\frac{I^{a2}_j+(1-\eta_j) I^{s2}_j}{N_j}-\beta S^2_i(t)\sum_{n}^{j=1}\lambda_{ij}\frac{I^{a2}_j+(1-\eta_j) I^{s2}_j}{N_j}}\\
	=&\abs*{S^1_i(t)}\abs*{\beta \sum_{n}^{j=1}\frac{\lambda_{ij}}{N_j}\left(\left(I^{a1}_j-I^{a2}_j\right)+\eta\left(I^{s1}_j-I^{s2}_j\right)\right)}\\&+\abs*{\beta \sum_{n}^{j=1}\lambda_{ij}\frac{I^{a2}_j+(1-\eta_j) I^{s2}_j}{N_j}\left(S^1_i(t)-S^2_i(t)\right)}.\\
	\abs*{B(X^1_i)-B(X^2_i)}	\leq&\frac{\Gamma_i}{\mu_i}\left(\beta \sum_{n}^{j=1}\frac{\lambda_{ij}}{N_j}\left(\abs*{I^{a1}_j-I^{a2}_j}+(1-\eta_j)\abs*{I^{s1}_j-I^{s2}_j}\right)\right)\\&+\beta(2-\eta_j)\sum_{j=1}^{n}\frac{\lambda_{ij}}{N_j}\frac{\Gamma_j}{\mu_j}\abs*{S^1_i(t)-S^2_i(t)}.
	\end{aligned}
	\end{equation*}
	we also have that 
	\begin{equation*}
	\abs*{B(X^1)-B(X^2)}\leq\sum_{i=1}^{n}\abs*{B(X^1_i)-B(X^2_i)}.
	\end{equation*}
	Therefore, we obtain
	\begin{equation*}
	\abs*{B(X^1)-B(X^2)}\leq M\sum_{i=1}^{n}\left(\abs*{I^{a1}_i-I^{a2}_i}+\abs*{I^{s1}_i-I^{s2}_i}+\abs*{S^1_i(t)-S^2_i(t)}\right)
	\end{equation*}
	\[\implies \abs*{B(X^1)-B(X^2)}\leq M\abs{\abs*{X_1-X_2}},\]
	where\[M=\beta(2-\eta_j)\sum_{i=1}^{n}\sum_{j=1}^{n}\frac{\Gamma_i}{\mu_i}\frac{\lambda_{ij}}{N_j}.\]
	Therefore, we have that 
	\[\abs{\abs{\Phi(X_1)-\Phi(X_2)}}\leq M^0\abs{\abs*{X_1-X_2}},\]
	where $M^0=\min\left(M,\abs*{\abs*{A}}\right).$ Therefore, it follows that the function $\Phi$ is uniformly Lipschitz continuous, and the condition on  $S_i(0)\geq0, E_i(0) \geq0, I^a_i(0)\geq 0, I^s_i(0)\geq0, R_i(0) \geq 0$ for all $i = 1, 2, \cdots, n.$ Hence, using results stated in \cite{bir1989} (section 10 of the first chapter), we conclude that a solution of the system \eqref{mod} exists.
\end{proof}

\subsection{Disease Free Equilibrium and Basic Reproduction Number}
The disease free equilibrium (DFE) for model system \eqref{mod} is $$E^0 =\left(S^0_1,0,0,0,0,S^0_2,0,0,0,0,\cdots,S^0_n,0,0,0,0\right)\in \Omega$$
where $S^0_i = \frac{\Gamma_i}{\mu_i}$ and always exists. 
If the value of $R_0$ is less than $1$, then the likelihood of producing new infection cases by infectious people is insufficient for an outbreak to be sustained. If $R_0$ is greater than $1$, the number of secondary cases increases the infection and an epidemic arises until the proportion of susceptible individuals declines. The next-generation matrix method in \cite{van2002} is applied to determine the basic reproduction number $(R_0)$.\\
For calculation of $R_0$, We rewrite the middle $3n$ (second, third and fourth) equations of system \eqref{mod} as follows: $x^{\prime}=\mathcal{F}-\mathcal{V}$ and $x=\left(E_1,E_2,\cdots,E_n,I^a_1,I^a_2,\cdots,I^a_n,I^s_1,I^s_2,\cdots,I^s_n\right)\in\mathbb{R}^{3n},$ where
\begin{equation}\label{math_fv}
\mathcal{F}=\left(\begin{array}{c}
\beta S_1(t) \sum_{j=1}^{n}\lambda_{1j}\left(\frac{I^a_j(t)+(1-\eta_j) I^s_j(t)}{N_j(t)}\right)\\
\beta S_2(t) \sum_{j=1}^{n}\lambda_{2j}\left(\frac{I^a_j(t)+(1-\eta_j) I^s_j(t)}{N_j(t)}\right)\\
\vdots\\
\beta S_n(t) \sum_{j=1}^{n}\lambda_{nj}\left(\frac{I^a_j(t)+(1-\eta_j) I^s_j(t)}{N_j(t)}\right)\\
0\\
0\\
\vdots\\
0\\
0\\
0\\
\vdots\\
0
\end{array}\right)\,\,\text{and}\,\mathcal{V}=\left(\begin{array}{c}
(\alpha_1+\mu_1)E_1(t)\\
(\alpha_2+\mu_2)E_2(t)\\
\vdots\\
(\alpha_2+\mu_2)E_2(t)\\
(1-\sigma)\alpha_1E_1(t)-\left(\gamma^a_1+\delta^a_1+\mu_1\right)I^a_1(t)\\
(1-\sigma)\alpha_2E_2(t)-\left(\gamma^a_2+\delta^a_2+\mu_2\right)I^a_2(t)\\
\vdots\\
(1-\sigma)\alpha_nE_n(t)-\left(\gamma^a_n+\delta^a_n+\mu_n\right)I^a_i(t)\\
\sigma\alpha_1E_1(t)-\left(\gamma^s_1+\delta^s_1+\mu_1\right)I^s_1(t)\\
\sigma\alpha_2E_2(t)-\left(\gamma^s_2+\delta^s_2+\mu_2\right)I^s_2(t)\\
\vdots\\
\sigma\alpha_nE_n(t)-\left(\gamma^s_n+\delta^s_n+\mu_n\right)I^s_n(t)
\end{array}\right).
\end{equation}
Further, by calculating the Jacobian matrices $F$ and $V$ at the DFE $(E^0),$ we obtain
\begin{equation}\label{fv}
F=\left(\begin{array}{ccc}
\bf{O}&F_1&F_2\\
\bf{O}&\bf{O}&\bf{O}\\
\bf{O}&\bf{O}&\bf{O}
\end{array}\right)_{3n \times 3n}\,\,\text{and}\,V=\left(\begin{array}{ccc}
V_{11}&\bf{O}&\bf{O}\\
-V_{21}&V_{22}&\bf{O}\\
-V_{31}&\bf{O}&V_{33}
\end{array}\right)_{3n \times 3n},
\end{equation}
where $\bf{O}$ is the zero matrix of order $n\times n$ having all entries equal to zero  and 
\[F_1=\left(\begin{array}{cccc}
\beta \lambda_{11}&\beta \lambda_{12}&\cdots&\beta \lambda_{1n}\\
\beta \lambda_{21}&\beta \lambda_{22}&\cdots&\beta \lambda_{nn}\\
\vdots            &\vdots             &     & \vdots\\
\beta \lambda_{n1}&\beta \lambda_{n2}&\cdots&\beta \lambda_{nn}
\end{array}\right)_{n\times n},\,\,V_{22}=\left(\begin{array}{cccc}
\gamma^a_1+\delta^a_1+\mu_1&0&\cdots&0\\
0&\gamma^a_2+\delta^a_2+\mu_2&\cdots&0\\
\vdots            &\vdots             &     & \vdots\\
0&0&\cdots&\gamma^a_n+\delta^a_n+\mu_n
\end{array}\right)_{n\times n},\]
\[V_{11}=\left(\begin{array}{cccc}
\alpha_1+\mu_1&0&\cdots&0\\
0&\alpha_2+\mu_2&\cdots&0\\
\vdots            &\vdots             &     & \vdots\\
0&0&\cdots&\alpha_n+\mu_n
\end{array}\right)_{n\times n},\,V_{21}=\left(\begin{array}{cccc}
(1-\sigma)\alpha_1&0&\cdots&0\\
0&(1-\sigma)\alpha_2&\cdots&0\\
\vdots            &\vdots             &     & \vdots\\
0&0&\cdots&(1-\sigma)\alpha_n
\end{array}\right)_{n\times n},\]
\[V_{31}=\left(\begin{array}{cccc}
\sigma\alpha_1&0&\cdots&0\\
0&(\sigma\alpha_2&\cdots&0\\
\vdots            &\vdots             &     & \vdots\\
0&0&\cdots&\sigma\alpha_n
\end{array}\right)_{n\times n},\,\text{and}\,V_{33}=\left(\begin{array}{cccc}
\gamma^s_1+\delta^s_1+\mu_1&0&\cdots&0\\
0&\gamma^s_2+\delta^s_2+\mu_2&\cdots&0\\
\vdots            &\vdots             &     & \vdots\\
0&0&\cdots&\gamma^s_n+\delta^s_n+\mu_n
\end{array}\right)_{n\times n},\]
\[F_2=\left(\begin{array}{cccc}
\beta \eta_1 \lambda_{11}&\beta \eta_2\lambda_{12}&\cdots&\beta \eta_n\lambda_{1n}\\
\beta \eta_1\lambda_{21}&\beta \eta_2 \lambda_{22}&\cdots&\beta\eta_n \lambda_{2n}\\
\vdots            &\vdots             &     & \vdots\\
\beta \eta_1\lambda_{n1}&\beta \eta_2\lambda_{n2}&\cdots&\beta\eta_n \lambda_{nn}
\end{array}\right)_{n\times n},\]
The basic reproduction number $(R_0)$ is given by the following equation;
\begin{equation}\label{R0}
R_0=\rho(FV^{-1})=\rho\left(F_{1}V_{21}V^{-1}_{11}V^{-1}_{22}+ F_2V_{31}V^{-1}_{11}V^{-1}_{33}\right)
\end{equation}
Here matrix $F_{1}V_{21}V^{-1}_{11}V^{-1}_{22}$ gives the contribution from asymptomatic infected individuals and  matrix $ F_2V_{31}V^{-1}_{11}V^{-1}_{33}$ gives the contribution  from symptomatic infected individuals.
\subsection{Global Dynamics when $R_0\leq 1$}
This section shows that the disease could be eliminated from a community  when the basic reproduction number $R_0\leq 1$ and whatever size of the initial outbreak and that the infection persists otherwise.
\begin{theorem}\label{gase0}
	Assume contact matrix $\lambda_{ij}$ is irreducible. If $R_0\leq 1$, then the DFE $(E^0)$ is globally asymptotically stable in $\Omega.$  If $R_0>1$, then $E^0$ is unstable and system \eqref{mod} is uniformly persistent and there exists at
	least one endemic equilibrium (EE).
\end{theorem}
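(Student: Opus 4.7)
The plan is to split the theorem into the two regimes and use standard multi-group/next-generation-matrix tools. By Theorem \ref{pos_theo} the region $\Omega$ is positively invariant, so we may restrict attention to trajectories in $\Omega$. Local (in)stability of the DFE $E^0$ in both regimes follows directly from the characterization in van den Driessche--Watmough \cite{van2002}: the Jacobian of the infection subsystem at $E^0$ equals $F-V$, and $\rho(FV^{-1}) = R_0$ controls its spectral abscissa. So the real content is (i) promoting local stability at $R_0 \le 1$ to a global statement on all of $\Omega$, and (ii) producing uniform persistence and an endemic equilibrium when $R_0 > 1$.

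For the global stability at $R_0 \le 1$, my plan is to build a linear Lyapunov function from the Perron left eigenvector of the next-generation matrix. Irreducibility of $\lambda_{ij}$, together with the block structure used in \eqref{R0}, makes $FV^{-1}$ a nonnegative irreducible matrix, so by Perron--Frobenius there exists a strictly positive row vector $w^{T}$ with $w^{T} F V^{-1} = R_0\, w^{T}$. Let $x = (E_1,\dots,E_n,I^a_1,\dots,I^a_n,I^s_1,\dots,I^s_n)^{T}$ and define
\begin{equation*}
L(t) \;=\; w^{T} V^{-1} x(t).
\end{equation*}
Using $S_i \le S_i^0$ on $\Omega$ (a consequence of the $S_i$-equation and non-negativity), the infection subsystem satisfies $\dot{x} \le (F-V)x$ componentwise, so
\begin{equation*}
\dot{L} \;\le\; w^{T} V^{-1}(F-V) x \;=\; (R_0 - 1)\, w^{T} x \;\le\; 0
\end{equation*}
whenever $R_0 \le 1$. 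Equality forces either $x \equiv 0$ or, in the borderline case $R_0 = 1$, $S_i \equiv S_i^0$, and chasing this through the $E_i$, $I^a_i$, $I^s_i$, $R_i$ equations on the largest invariant set shows the trajectory collapses to $E^0$. LaSalle's invariance principle then gives global asymptotic stability of $E^0$ in $\Omega$.

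For the case $R_0 > 1$, instability of $E^0$ is already supplied by \cite{van2002} (the matrix $F-V$ has a positive eigenvalue). To obtain uniform persistence I will apply the Hale--Waltman / Thieme framework. Decompose $\Omega = \Omega^0 \cup \partial\Omega^0$ where $\partial\Omega^0 = \{x\in\Omega : E_i = I^a_i = I^s_i = 0 \text{ for all } i\}$; this set is positively invariant, and the only equilibrium inside it is $E^0$, toward which every boundary trajectory converges. It remains to show $W^s(E^0)\cap \Omega^0 = \emptyset$, i.e.\ that $E^0$ is a uniform weak repeller for interior orbits. The standard argument: if some interior orbit approached $E^0$, then eventually $S_i \ge S_i^0 - \varepsilon$, and comparing $\dot x$ to the linear system with matrix $F_\varepsilon - V$ (whose spectral radius approaches $R_0 > 1$ as $\varepsilon \to 0$) forces exponential growth of $w^T x$, contradicting convergence to $E^0$. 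Combined with the fact that the semiflow is eventually bounded (Theorem \ref{pos_theo}) and $\{E^0\}$ is an acyclic isolated covering of the boundary omega-limits, Theorem 4.6 of Thieme yields uniform persistence. Finally, a dissipative uniformly persistent semiflow on $\Omega^0$ admits an interior fixed point by the Zhao/Hofbauer fixed-point theorem, giving at least one endemic equilibrium.

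The main obstacle I anticipate is the LaSalle step at the knife-edge $R_0 = 1$: showing that $\dot L = 0$ plus the dynamics forces $x \equiv 0$ requires the irreducibility of $\lambda_{ij}$ in an essential way, because one must propagate the vanishing of $w^T x$ through every age class via the coupled infection terms. A secondary technical point is verifying that $FV^{-1}$ (not just $\lambda$) inherits irreducibility from the contact matrix given the block form of $V$ used in \eqref{R0}; this should follow from the fact that $V_{11}, V_{22}, V_{33}$ are diagonal with positive entries and $V_{21}, V_{31}$ are positive diagonal, so the off-diagonal pattern of $FV^{-1}$ matches that of $\lambda$.
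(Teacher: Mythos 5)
Your proposal follows essentially the same route as the paper: a linear Lyapunov function $w^{T}V^{-1}x$ built from a Perron left eigenvector of the next-generation matrix (the Shuai--van den Driessche construction), LaSalle's invariance principle for $R_0\le 1$, and instability of $E^0$ plus standard persistence theory and compactness/fixed-point arguments for $R_0>1$; the paper cites Freedman--Ruan--Tang and Li et al.\ where you cite Thieme/Hale--Waltman and Zhao/Hofbauer, but the skeleton is identical.

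One slip to correct: you take $w^{T}$ with $w^{T}FV^{-1}=R_0 w^{T}$ (left eigenvector of $FV^{-1}$) but then use the identity $w^{T}V^{-1}(F-V)x=(R_0-1)\,w^{T}x$, which requires $w^{T}V^{-1}F=R_0 w^{T}$, i.e.\ that $w^{T}$ is a left eigenvector of $V^{-1}F$ (this is exactly what the paper uses). With your choice one only gets $w^{T}(F-V)=(R_0-1)\,w^{T}V$, and $w^{T}Vx$ is not obviously nonnegative since $V$ has negative off-diagonal blocks. Since $V^{-1}F$ is also nonnegative and irreducible (by the same argument you sketch) and $\rho(V^{-1}F)=\rho(FV^{-1})=R_0$, the fix is immediate: take $w^{T}$ to be the left Perron eigenvector of $V^{-1}F$ and the rest of your argument goes through as in the paper.
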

\begin{proof}
	Let $x=\left(E_1,E_2,\cdots,E_n,I^a_1,I^a_2,\cdots,I^a_n,I^s_1,I^s_2,\cdots,I^s_n\right)\in\mathbb{R}^{3n}$ and\\ $y=(S_1,S_2,\cdots,S_n,R_1,R_2,\cdots,R_n)\in \mathbb{R}^{2n}$ be the disease compartment and disease-free compartment vector, respectively. We set $f(x,y)=(F-V)x-\mathcal{F}(x,y)+\mathcal{V}(x,y)$ where vector function $F,\,V$ are defined in Eq. \eqref{fv} and vector function  $\mathcal{F}(x,y),\,\mathcal{V}(x,y)$ are defined in Eq. \eqref{math_fv}. Further, for the disease compartments and disease free compartments, the model system \eqref{mod} can be written as
	\begin{equation}\label{com_mod}
	x^{\prime}=(F-V)x-f(x,y)\,\, \text{and}\,y^{\prime}=g(x,y),
	\end{equation}
	where \[f(x,y)=\left(\begin{array}{c}
	\beta\sum_{j=1}^{n}\lambda_{1j}\left(1-\frac{S_1}{N_j}\right)\left(I^a_j+(1-\eta_j) I^s_j\right)\\
	\beta\sum_{j=1}^{n}\lambda_{2j}\left(1-\frac{S_2}{N_j}\right)\left(I^a_j+(1-\eta_j) I^s_j\right)\\
	\vdots\\
	\beta\sum_{j=1}^{n}\lambda_{nj}\left(1-\frac{S_n}{N_j}\right)\left(I^a_j+(1-\eta_j) I^s_j\right)\\
	0\\
	0\\
	\vdots\\
	0\\
	0\\
	0\\
	\vdots\\
	0
	\end{array}\right)\,\,\,\text{with}\,\,f(0,y)=0.\]
	It can be easily observed that $f(x,y)\geq 0$ in $\Omega\subset \mathbb{R}^{3n+2n}_+\,,F$ and $V$ are nonnegative matrices.  $P_0=(0,y^0)$ be the disease free equilibrium for the system \eqref{com_mod} that is equivalent to DFE $E^0$ of system \eqref{mod}. Since $y^{\prime}=g(0,y)$ has a unique positive equilibrium $y^0=\left(S_1^0,S^0_2,\cdots,S^0_n,0,0,\cdots,0\right)\in \mathbb{R}^{2n},$  then it is globally asymptotically stable in $\mathbb{R}^{2n}.$ Since $\lambda_{ij}$ is irreducible, therefore $V^{-1}F$ is also irreducible and nonnegative. It follows by Perron-Frobenius theory \cite{ber1994} that $c_T$ be the nonnegative left eigenvector of the matrix $V^{-1}F$ corresponding to the eigenvalue $R_0=\rho(V^{-1}F).$ Further, we assume $L_{DFE}=c^TV^{-1}x$ is a  Lyapunov function for the system \eqref{com_mod} on the region $\Omega.$ By differentiating $L_{DFE}$ along the solution of the system \eqref{com_mod}, we obtain
	\begin{equation}
	\begin{aligned}
	L^{\prime}_{DFE}=&c^TV^{-1}x^\prime=c^TV^{-1}\left((F-V)x-f(x,y)\right)=c^TV^{-1}(F-V)x-c^TV^{-1}f(x,y)\\
	=&(R_0-1)c^Tx-c^TV^{-1}f(x,y).
	\end{aligned}
	\end{equation}  
	Thus, if $R_0\leq 0,$ then $L^{\prime}_{DFE}\leq 0$ in $\Omega$ and $L^{\prime}_{DFE}=0$   implies that  $c^Tx=0$ hence $x=0.$ Using the global
	stability for disease free system $y^\prime=g(0,y)$ and $f(0,y)=0,$ singleton $\{P_0\}$ is the only invariant set in $\mathbb{R}^{5n}_+$ where  $L^{\prime}_{DFE}=0.$ When $R_0=1,$ singleton $\{P_0\}$ is largest invariant set where $L^{\prime}_{DFE}=c^TV^{-1}f(x,y)=0.$  Therefore, by LaSalle's invariance principle \cite{las1976}, $P_0$ is globally asymptotically stable in $\Omega$ when $R_0\leq 1 .$\\
	If $R_0>1$ then $L^{\prime}_{DFE}>0$ on condition that $x>0$ and $y=y^0.$ By continuity of $L^{\prime}_{DFE}>0$ in the interval $(P_0-\epsilon,P_0+\epsilon)$ where $\epsilon$ is a small number, one can say that all the solutions of system \eqref{com_mod} in the positive cone and near by $P_0$ diverges from $P_0$, except those on the invariant y-axis. This implies that
	$P_0$ is unstable. Using the uniform persistence result form \cite{fre1994} and an argument given in the proof of Proposition 3.3 of \cite{li1999}, it can be shown that when $R_0 > 1,$ the uniform persistence of system \eqref{mod} is assured by instability of $P_0.$ Further, the existence of at least one EE can be confirmed by using the uniform persistence and the concept of positive invariance of compact set $\Omega.$
\end{proof}

\subsection{Global Dynamics when $R_0> 1$}
This section illustrates that the endemic equilibrium is unique and globally asymptotically stable (GAS) in the interior of the feasible region $\Omega$. Biologically, we say that the disease always remains in the population and persists at a unique endemic level, whether the initial outbreak's size is small or larger. By Theorem \ref{gase0}, we have that an endemic equilibrium $$E^*=\left(\bar{S_1},\bar{S_2},\cdots,\bar{S_n},\bar{E_1},\bar{E_2},\cdots,\bar{E_n},\bar{I^a_1},\bar{I^a_2},\cdots,\bar{I^a_n},\bar{I^s_1},\bar{I^s_2},\cdots,\bar{I^s_n},\bar{R_1},\bar{R_2},\cdots,\bar{R_n}\right)$$
exists and satisfy the following system of equations:
\begin{equation}\label{EE_sys}
\begin{aligned}
\Gamma_i-\beta S_i(t) \sum_{j=1}^{n}\lambda_{ij}\left(\frac{I^a_j(t)+(1-\eta_j) I^s_j(t)}{N_j(t)}\right)-\mu_i S_i=&0,\\
\beta S_i(t) \sum_{j=1}^{n}\lambda_{ij}\left(\frac{I^a_j(t)+(1-\eta_j) I^s_j(t)}{N_j(t)}\right)-(\alpha_i+\mu_i)E_i(t)=&0,\\
(1-\sigma)\alpha_iE_i(t)-\left(\gamma^a_i+\delta^a_i+\mu_i\right)I^a_i(t)=&0,\\
\sigma\alpha_iE_i(t)-\left(\gamma^s_i+\delta^s_i+\mu_i\right)I^s_i(t)=&0,\\
\gamma^a_iI^a_i(t)+\gamma^s_iI^s_i(t)-\mu_iR_i(t)=&0.\\
\end{aligned}
\end{equation}
\begin{theorem}\label{gsee}
	Assume that the contact matrix $\lambda_{ij}$ is irreducible. The system \eqref{mod} has a unique and globally asymptotically stable (GAS) endemic equilibrium $(E^*)$ in $\Omega$ when $R_0>1$
\end{theorem}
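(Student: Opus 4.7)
The plan is to adapt the graph-theoretic Lyapunov-function technique that has become standard for multi-group SEIR models. Uniqueness will come out as a by-product of the global attractivity argument, so I focus on constructing a Lyapunov function $V$ on the interior of $\Omega$ which is strict except at $E^*$; combined with the uniform persistence already guaranteed by Theorem \ref{gase0}, this will yield both existence of a unique endemic equilibrium and its global asymptotic stability.

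First, I would introduce the Volterra building block $g(x) = x - 1 - \ln x$, which is nonnegative and vanishes only at $x=1$, and assign to each age group $i$ the weighted group Lyapunov function
\[
V_i = \bar{S_i}\, g\!\left(\tfrac{S_i}{\bar{S_i}}\right) + \bar{E_i}\, g\!\left(\tfrac{E_i}{\bar{E_i}}\right) + \omega^a_i \bar{I^a_i}\, g\!\left(\tfrac{I^a_i}{\bar{I^a_i}}\right) + \omega^s_i \bar{I^s_i}\, g\!\left(\tfrac{I^s_i}{\bar{I^s_i}}\right),
\]
with $\omega^a_i = (\alpha_i+\mu_i)/[(1-\sigma)\alpha_i]$ and $\omega^s_i = \eta_i(\alpha_i+\mu_i)/[\sigma\alpha_i]$, chosen so that, on substitution of the endemic relations \eqref{EE_sys}, the $I^a_i$ and $I^s_i$ contributions align with the force-of-infection term driving $\dot E_i$. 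A direct computation then reduces $\dot V_i$, up to a manifestly non-positive remainder from the $S_i$-block, to a sum over $j$ of cross-terms proportional to $\beta\, \bar{S_i}\, \lambda_{ij}\, (\bar{I^a_j} + \eta_j \bar{I^s_j})/\bar{N_j}$ times differences of $g$-evaluations along the directed edge $j \to i$.

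Next, I would assemble the global candidate $V = \sum_i c_i V_i$, taking the weights $c_i > 0$ as the principal cofactors of the Laplacian of the weighted digraph whose edge from $j$ to $i$ carries the coupling coefficient extracted above. Irreducibility of $\lambda_{ij}$ makes this digraph strongly connected, so all $c_i$ are strictly positive, and Kirchhoff's matrix-tree identity rewrites the aggregated cross-terms as a sum over directed cycles in which the logarithmic pieces telescope to zero. What remains is a non-positive combination of $g$-evaluations, so $\dot V \le 0$; equality forces $S_i = \bar{S_i}$ together with $E_i/\bar{E_i} = I^a_j/\bar{I^a_j} = I^s_j/\bar{I^s_j}$ along every edge, and propagating these equalities through the strongly connected digraph pins the trajectory to $E^*$. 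LaSalle's invariance principle then delivers global attractivity, and the strictness of $V$ simultaneously yields uniqueness of $E^*$.

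The main obstacle I anticipate is the non-constant denominator $N_j(t)$ in the force-of-infection, which couples $R_j$ into the $E_i$ equation and breaks the clean Volterra structure exploited in the classical multi-group SEIR argument. To handle it I plan either to augment each $V_i$ with a term of the form $g(N_i/\bar{N_i})$, using $\dot N_i = \Gamma_i - \mu_i N_i - \delta^a_i I^a_i - \delta^s_i I^s_i$ to absorb the extra derivative terms, or to change variables to the prevalence ratios $x^a_j = I^a_j/N_j$ and $x^s_j = I^s_j/N_j$ before performing the graph-theoretic cancellation so that $N_j$ does not appear independently in the cross-terms. A subordinate technical check is that the combined adjacency matrix encoding both the asymptomatic and symptomatic couplings remains irreducible; this follows from irreducibility of $\lambda_{ij}$ together with $\bar{I^a_j}, \bar{I^s_j} > 0$, which is guaranteed by the uniform persistence part of Theorem \ref{gase0}.
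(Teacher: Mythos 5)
Your overall route is the same as the paper's: group-wise Volterra functions $g(x)=x-1-\ln x$, a weighted sum $\sum_i c_i V_i$ with coefficients coming from the matrix-tree/Kirchhoff structure of the weighted digraph built from $\beta\bar{S_i}\lambda_{ij}(\bar{I^a_j}+\eta_j\bar{I^s_j})/\bar{N_j}$, cycle-wise telescoping of the logarithmic terms, irreducibility giving strong connectivity and positivity of the $c_i$, and LaSalle at the end; the paper implements exactly this by splitting the Volterra terms into an $(S_i,E_i)$ block and an $(I^a_i,I^s_i)$ block and invoking Theorems 3.3--3.5 of Shuai and van den Driessche \cite{shu2013}, with existence already supplied by Theorem \ref{gase0}, just as you assume. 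Your remark about the time-varying denominator $N_j(t)$ is well taken — the paper's own computation quietly replaces $N_j$ by $\bar{N_j}$ and then drops it, so your two proposed repairs (augmenting with $g(N_i/\bar{N_i})$ or passing to prevalence ratios) are, if anything, more careful than the printed argument.

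The one concrete flaw is your explicit choice of within-group weights. With $\omega^a_i=(\alpha_i+\mu_i)/[(1-\sigma)\alpha_i]$ and $\omega^s_i=\eta_i(\alpha_i+\mu_i)/[\sigma\alpha_i]$, the linear-in-$E_i$ contributions from the two infected blocks are $\omega^a_i(1-\sigma)\alpha_iE_i+\omega^s_i\sigma\alpha_iE_i=(1+\eta_i)(\alpha_i+\mu_i)E_i$, which does not cancel the single $-(\alpha_i+\mu_i)E_i$ coming from the $E_i$ block; the required identity is $\omega^a_i(1-\sigma)\alpha_i+\omega^s_i\sigma\alpha_i=\alpha_i+\mu_i$, and in the multi-group setting the correct split between $\omega^a_i$ and $\omega^s_i$ is not a purely local quantity: it must reflect the equilibrium shares with which $I^a_i$ and $I^s_i$ enter the $c_j$-weighted forces of infection of all groups $j$, so it is determined jointly with the $c_i$. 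This is precisely why the paper does not fix such weights a priori but keeps $I^a_i$ and $I^s_i$ in a separate unweighted Volterra function and lets the graph-theoretic coefficients of \cite{shu2013} (the $c_{n+i}=\sum_j c_j a_{j,n+i}/a_{n+i,i}$ relations) absorb the balancing. Your stated selection principle ("chosen so that the contributions align with the force-of-infection term") is the right one, so the slip is repairable within your own framework, but as displayed the cancellation you rely on would fail whenever $\eta_i\neq 0$.
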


\begin{proof}
	To determine the global stability of endemic equilibrium $E^*$ of system \eqref{mod}, let $D_{1i}=S_i-\bar{S_i}-\bar{S_i}\ln\frac{S_i}{\bar{S_i}}+E_i-\bar{E_i}-\bar{E_i}\ln \frac{E_i}{\bar{E_i}},$  $D_{2i}=I^a_i-\bar{I^a_i}-\bar{I^a_i}\ln \frac{I^a_i}{\bar{I^a_i}}+I^s_i-\bar{I^s_i}-\bar{I^s_i}\ln \frac{I^s_i}{\bar{I^s_i}}$ 	and $D_{3i}=\bar{R_i}-\bar{R_i}\ln \frac{R_i}{\bar{R_i}}.$ By differentiating $D_{1i}$ along the solutions of system \eqref{mod}, we obtain
	
	\begin{equation*}
	\begin{aligned}
	D^{\prime}_{1i}=&\left(1-\frac{\bar{S_i}}{S_i}\right)S^{\prime}_i+\left(1-\frac{\bar{E_i}}{E_i}\right)E^{\prime}_i\\
	=&\left(1-\frac{\bar{S_i}}{S_i}\right)\left(\Gamma_i-\beta S_i \sum_{j=1}^{n}\lambda_{ij}\left(\frac{I^a_j+(1-\eta_j) I^s_j}{N_j}\right)-\mu_i S_i\right)\\
	&+\left(1-\frac{\bar{E_i}}{E_i}\right)\left(\beta S_i \sum_{j=1}^{n}\lambda_{ij}\left(\frac{I^a_j+(1-\eta_j) I^s_j}{N_j}\right)-(\alpha_i+\mu_i)E_i\right).\\
	& \text{By using the system of nonlinear equations \eqref{EE_sys}, we obtain}\\
	=&\left(1-\frac{\bar{S_i}}{S_i}\right)\left(\beta \bar{S_i} \sum_{j=1}^{n}\lambda_{ij}\left(\frac{\bar{I^a_j}+(1-\eta_j) \bar{I^s_j}}{N_j}\right)+\mu_i \bar{S_i}-\beta S_i \sum_{j=1}^{n}\lambda_{ij}\left(\frac{I^a_j+(1-\eta_j) I^s_j}{N_j}\right)-\mu_i S_i\right)\\
	&+\left(1-\frac{\bar{E_i}}{E_i}\right)\left(\beta S_i \sum_{j=1}^{n}\lambda_{ij}\left(\frac{I^a_j+(1-\eta_j) I^s_j}{N_j}\right)-\beta \bar{S_i} \sum_{j=1}^{n}\lambda_{ij}\left(\frac{\bar{I^a_j}+(1-\eta_j) \bar{I^s_j}}{N_j}\right)\right)\\
	=&\mu_i\bar{S_i}\left(2-\frac{\bar{S_i}}{S_i}-\frac{S_i}{\bar{S_i}}\right)+2\beta \bar{S_i} \sum_{j=1}^{n}\lambda_{ij}\left(\frac{\bar{I^a_j}+(1-\eta_j) \bar{I^s_j}}{\bar{N_j}}\right)-\beta \bar{S_i} \sum_{j=1}^{n}\lambda_{ij}\left(\frac{\bar{I^a_j}+(1-\eta_j) \bar{I^s_j}}{\bar{N_j}}\right)\frac{\bar{S_i}}{S_i}\\
	&+\beta \bar{S_i} \sum_{j=1}^{n}\lambda_{ij}\left(\frac{I^a_j+(1-\eta_j) I^s_j}{\bar{N_j}}\right)-\beta \bar{S_i} \sum_{j=1}^{n}\lambda_{ij}\left(\frac{\bar{I^a_j}+(1-\eta_j) \bar{I^s_j}}{\bar{N_j}}\right)\frac{E_i}{\bar{E_i}}\\
	&-\beta S_i \sum_{j=1}^{n}\lambda_{ij}\left(\frac{I^a_j+(1-\eta_j) I^s_j}{\bar{N_j}}\right)\frac{\bar{E_i}}{E_i}\\
	\leq &2\beta \bar{S_i} \sum_{j=1}^{n}\lambda_{ij}\left(\bar{I^a_j}+(1-\eta_j) \bar{I^s_j}\right)-\beta \bar{S_i} \sum_{j=1}^{n}\lambda_{ij}\left(I^a_j+(1-\eta_j) I^s_j\right)\frac{\bar{S_i}}{S_i}+\beta \bar{S_i} \sum_{j=1}^{n}\lambda_{ij}\left(I^a_j+(1-\eta_j) I^s_j\right)\\
	&-\beta \bar{S_i} \sum_{j=1}^{n}\lambda_{ij}\left(I^a_j+(1-\eta_j) I^s_j\right)\frac{E_i}{\bar{E_i}}-\beta S_i \sum_{j=1}^{n}\lambda_{ij}\left(I^a_j+(1-\eta_j) I^s_j\right)\frac{\bar{E_i}}{E_i}\\
	=&\sum_{j=1}^{n}\beta\lambda_{ij} \bar{S_i} \bar{I^a_j}\left(2-\frac{\bar{S_i}}{S_i}-\frac{E_i}{\bar{E_i}}+\frac{I^a_j}{\bar{I^a_j}}+\frac{S_iI^a_j\bar{E_i}}{\bar{S_i}\bar{I^a_j}E_i}\right)+\sum_{j=1}^{n}\beta(1-\eta_j)\lambda_{ij} \bar{S_i} \bar{I^s_j}\left(2-\frac{\bar{S_i}}{S_i}-\frac{E_i}{\bar{E_i}}+\frac{I^s_j}{\bar{I^s_j}}+\frac{S_iI^s_j\bar{E_i}}{\bar{S_i}\bar{I^s_j}E_i}\right).\\
	& \text{Using the inequality}\, 1-x\leq \log x, 
	\text{we obtain}\\
	\leq & \sum_{j=1}^{n}\beta\lambda_{ij} \bar{S_i} \bar{I^a_j}\left(\frac{I^a_j}{\bar{I^a_j}}-\frac{E_i}{\bar{E_i}}-\log\frac{I^a_j}{\bar{I^a_j}}+\log \frac{E_i}{\bar{E_i}} \right)+\sum_{j=1}^{n}\beta(1-\eta_j)\lambda_{ij} \bar{S_i} \bar{I^s_j}\left(\frac{I^s_j}{\bar{I^s_j}}-\frac{E_i}{\bar{E_i}}-\log\frac{I^s_j}{\bar{I^s_j}}+\log \frac{E_i}{\bar{E_i}} \right)\\
	=&\sum_{j=1}^{2n}a_{ij} G_{ij} \,\,\, \text{with} \,\,\,\, a_{ij}=\left\{\begin{split}
	&\beta \bar{S_i} \bar{I^a_j} \qquad 1\leq j\leq n,\\
	&  \beta (1-\eta_j) \bar{S_i} \bar{I^s_j}\qquad n+1\leq j\leq 2n.
	\end{split} 
	\right.
	\end{aligned}
	\end{equation*}
	By differentiating $D_{2i}$ along the solutions of system \eqref{mod}, we obtain
	\begin{equation*}
	\begin{aligned}
	D_{n+i}^{\prime}=&\left(1-\frac{\bar{I^a_i}}{I^a_i}\right)(I^a_i)^{\prime}+\left(1-\frac{\bar{I^s_i}}{I^s_i}\right)(I^s_i)^{\prime}\\
	=&\left(1-\frac{\bar{I^a_i}}{I^a_i}\right)\left((1-\sigma)\alpha_iE_i-\left(\gamma^a_i+\delta^a_i+\mu_i\right)I^a_i\right)+\left(1-\frac{\bar{I^s_i}}{I^s_i}\right)\left(\sigma\alpha_iE_i-\left(\gamma^s_i+\delta^s_i+\mu_i\right)I^s_i\right)\\
	=&\left(1-\frac{\bar{I^a_i}}{I^a_i}\right)\left((1-\sigma)\alpha_iE_i-(1-\sigma)\alpha_i\bar{E_i}\frac{I^a_i}{\bar{I^a_i}}\right)+\left(1-\frac{\bar{I^s_i}}{I^s_i}\right)\left(\sigma\alpha_iE_i-\sigma\alpha_i\bar{E_i}\frac{I^s_i}{\bar{I^s_i}}\right)\\
	=&(1-\sigma)\alpha_i\bar{E_i}\left(1-\frac{\bar{I^a_i}}{I^a_i}\right)\left(\frac{E_i}{\bar{E_i}}-\frac{I^a_i}{\bar{I^a_i}}\right)+\sigma\alpha_i\bar{E_i}\left(1-\frac{\bar{I^s_i}}{I^s_i}\right)\left(\frac{E_i}{\bar{E_i}}-\frac{I^s_i}{\bar{I^s_i}}\right)\\
	\leq& (1-\sigma)\alpha_i\bar{E_i}\left(\frac{E_i}{\bar{E_i}}-\frac{I^a_i}{\bar{I^a_i}}-\log \frac{E_i}{\bar{E_i}}+\log\frac{I^a_i}{\bar{I^a_i}}\right)+\sigma\alpha_i\bar{E_i}\left(\frac{E_i}{\bar{E_i}}-\frac{I^s_i}{\bar{I^s_i}}-\log\frac{E_i}{\bar{E_i}}+\log\frac{I^s_i}{\bar{I^s_i}}\right)\\
	=&a_{n+i,i} G_{n+i,i} \,\,\, \text{with} \,\,\,\, a_{n+i,i}=\left\{\begin{split}
	&(1-\sigma)\alpha_i\bar{E_i} \qquad &1\leq i\leq n,\\
	&  \sigma\alpha_i\bar{E_i}\qquad &n+1\leq i\leq 2n.
	\end{split} 
	\right.
	\end{aligned}
	\end{equation*} 	
	
	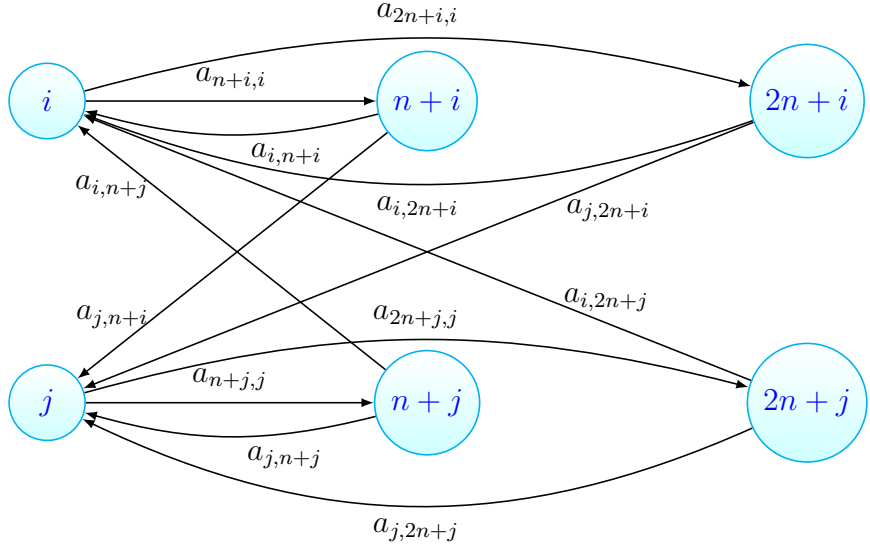
\begin{figure}[h]
		\begin {center}
		\begin {tikzpicture}[-latex ,auto ,node distance =4 cm and 5cm ,on grid ,
		semithick ,
		state/.style ={ circle ,top color =white , bottom color = processblue!20 ,
			draw, processblue , text=blue , minimum width =1 cm}]
		\node[state] (B)
		{$n+i$};
		\node[state] (A) [ left=of B] {$i$};
		\node[state] (C) [right =of B] {$2n+i$};
		\node[state] (D) [below =of A] {$j$};
		\node[state] (E) [below =of B] {$n+j$};
		\node[state] (F) [below =of C] {$2n+j$};

		\path (C) edge [bend left =20] node[below =0.01 cm] {$a_{i,2n+i}$} (A);
		\path (A) edge [bend right = -15] node[above =0.01 cm] {$a_{2n+i,i}$} (C);
		
		\path (F) edge [bend left =25] node[below =0.01 cm] {$a_{j,2n+j}$} (D);
		\path (D) edge [bend right = -15] node[above =0.01 cm] {$a_{2n+j,j}$} (F);
		
		\path (A) edge [bend left =0] node[above] {$a_{n+i,i}$} (B);
		\path (B) edge [bend left =15] node[pos=0.3, below =0.001 cm] {$a_{i,n+i}$} (A);

		\path (D) edge [bend left =0] node[above =0.01 cm] {$a_{n+j,j}$} (E);
		\path (E) edge [bend left =15] node[pos=0.3, below =0.01 cm] {$a_{j,n+j}$} (D);

		\path (E) edge [bend left =0] node[pos=0.9, below =0.15 cm] {$a_{i,n+j}$} (A);
		\path (B) edge [bend left =0] node[pos=0.9, above =0.15 cm] {$a_{j,n+i}$} (D);
		
		\path (F) edge [bend left =0] node[pos=0.2, above =0.01 cm] {$a_{i,2n+j}$} (A);
		\path (C) edge [bend left =0] node[pos=0.2, below =0.10 cm] {$a_{j,2n+i}$} (D);
	\end{tikzpicture}
\end{center}
\caption{The weighted digraph $(\mathcal{G}, \mathcal{A})$ constructed for the model system \eqref{mod} with two group}		
\end{figure}
Hence assumption (1) of Theorem 3.5 stated in \cite{shu2013} holds. To verify assumption (2) of the theorem 3.5 stated in \cite{shu2013}, we define the weighted digraph $(\mathcal{G}, \mathcal{A})$associated with the weight matrix $A=\left[a_{ij}\right]$ with $ a_{ij}>0$ as defined above and all other $a_{ij}=0.$ In digraph $(\mathcal{G}, \mathcal{A})$, there are two kinds of cycles involving direct transmission and cycles involving indirect transmission. For each cycle, assumption (2) of Theorem 3.5 can be verified. Therefore, by Theorem 3.5 discussed in  \cite{shu2013}, we have that $D=\sum_{i=1}^{n}c_iD_i$ is a Lyapunov function for model system \eqref{mod}. Since, $d^{-}(n+i)=d^{-}(2n+i)=1$ for each $i$ by Theorem 3.3 and 3.4 stated in \cite{shu2013}, we have $c_{n+i}=\sum_{j=1}^{n}c_j\frac{a_{j,n+i}}{a_{n+i,i}}.$ Thus, we obtain  \\

\[D=\sum_{i=1}^{n}c_iD_i+\sum_{i=1}^{n}c_{n+i}D_{n+i}=\sum_{i=1}^{n}c_iD_i+\sum_{i=1}^{n}\sum_{j=1}^{n}c_j\frac{a_{j,n+i}}{a_{n+i,i}}D_{n+i}.\]



\[D^\prime\leq\sum_{i=1}^{n}\sum_{j=1}^{n}c_ia_{ij}G_{ij}+\sum_{i=1}^{n}\sum_{j=1}^{n}c_ia_{i,n+j}\left(G_{i,n+j}+G_{n+j,j}\right)+\sum_{i=1}^{n}\sum_{j=1}^{n}c_j\frac{a_{j,n+i}}{a_{n+i,i}}a_{2n+i,n+i}G_{2n+i,n+i}.\]

\[D^\prime\leq\sum_{i=1}^{n}\sum_{j=1}^{n}c_ia_{ij}G_{ij}+\sum_{i=1}^{n}\sum_{j=1}^{n}c_ia_{i,n+j}\left(G_{i,n+j}+G_{n+j,j}\right)+\sum_{i=1}^{n}\sum_{j=1}^{n}c_i\frac{a_{i,n+j}}{a_{n+j,j}}a_{2n+j,n+j}G_{2n+j,n+j}.\]
Let $a^\prime_{ij}=\max\left(a_{ij},\frac{a_{i,n+j}}{a_{n+j,j}}(a_{2n+j,n+j})\right)$ and we obtain

\[D^\prime\leq\sum_{i=1}^{n}\sum_{j=1}^{n}c_ia^\prime_{ij}\left(G_{ij}+G_{2n+j,n+j}\right)+\sum_{i=1}^{n}\sum_{j=1}^{n}c_ia_{i,n+j}\left(G_{i,n+j}+G_{n+j,j}\right).\]

%
\noindent Let $\tilde{a}_{ij}=\max\left(a^\prime_{ij},a_{i,n+j}\right).$ Therefore, we have 
\[D^\prime\leq\sum_{i=1}^{n}\sum_{j=1}^{n}c_i\tilde{a}_{ij}\left(G_{ij}+G_{2n+j,n+j}+G_{i,n+j}+G_{n+j,j}\right).\]
Since, $G_{ij}+G_{2n+j,n+j}+G_{i,n+j}+G_{n+j,j}=2\left(\frac{E_j}{\bar{E_j}}-\frac{E_i}{\bar{E_i}}+\log \frac{E_i}{\bar{E_i}} -\log \frac{E_j}{\bar{E_j}}\right),$ it follows that  
\begin{equation}\label{egeee}
D^\prime\leq2\sum_{i=1}^{n}\sum_{j=1}^{n}c_i\tilde{a}_{ij}\left(\frac{E_j}{\bar{E_j}}-\frac{E_i}{\bar{E_i}}+\log \frac{E_i}{\bar{E_i}} -\log \frac{E_j}{\bar{E_j}}\right)=0.
\end{equation}

\noindent Here, the equality of equation \eqref{egeee} follows from Theorem 3.3 stated in \cite{shu2013}. We can easily verify that   $\{E^*\}$ is the largest invariant set for which $D^\prime=0$. Using this Lyapunov function and LaSalle's invariance principle \cite{las1976}, it follows that $E^*$ is globally asymptotically stable in $\Omega.$  
\end{proof}

\section{The Optimal Control Problem Formulation}\label{opti_1}
There were no treatments and vaccines for COVID-19  available in markets until the end of 2020. As a result, many scientists emphasize mainly three control strategies for challenging this pandemic and reducing the transmission risk with Coronavirus. First, avoid exposure to COVID-19 by the following prevention: reducing or avoiding unnecessary contact with infectious individuals, using face masks,  sensitizers, washing hands with water and soap frequently, cleaning and disinfecting contagion surfaces using proper chemicals. Second, if any susceptible individual make a contact with the infected person directly or indirectly, then encourage them to join the quarantine for provide some special care and protection for the disease. Third, putting infected individuals in isolation or making hospitalization to decrease the infection risk of COVID-19.

Our main objective of this optimal control problem is to minimize the total number of exposed people $(E_i)$, the asymptomatic infectious people $I^a_i$ and the symptomatic infectious people $I^s_i$. For this, we include the three types of control in the controlled system \eqref{con_mod}. The first control $\omega_i$ characterizes the effort of awareness programs in $i^{th}$ age group that target presenting the control measures to prevent individuals from being infected (washing hands regularly, using face masks, avoiding unnecessary contacts with infected surfaces, and individuals). Therefore, the term $(1-\omega_i)$ is used to reduce the force of infection by a proportion. The second control parameter $\theta_i$ represents the effort to encourage the exposed individuals of $i^{th}$ age group to join quarantine centers. Finally, control parameter $\vartheta$ measures the effort to aware the infected individuals to stay home in isolation or join the institutional isolation ward. So, the mathematical system with control parameters  is given by the following system of differential equations.

\begin{equation}\label{con_mod}
\begin{aligned}
\frac{dS_i(t)}{dt}=&\Gamma_i-\beta(1-\omega_i(t)) S_i(t) \sum_{j=1}^{n}\lambda_{ij}\left(\frac{I^a_j(t)+(1-\eta_j) I^s_j(t)}{N_j(t)}\right)-\mu_i S_i,\\
\frac{dE_i(t)}{dt}=&\beta S_i(t)(1-\omega_i(t)) \sum_{j=1}^{n}\lambda_{ij}\left(\frac{I^a_j(t)+(1-\eta_j) I^s_j(t)}{N_j(t)}\right)-(\alpha_i+\mu_i)E_i(t)-\theta_i(t)E_i(t),\\
\frac{dI^a_i(t)}{dt}=&(1-\sigma)\alpha_iE_i(t)-\left(\gamma^a_i+\delta^a_i+\mu_i\right)I^a_i(t)-\vartheta(t)I^a_i(t),\\
\frac{dI^s_i(t)}{dt}=&\sigma\alpha_iE_i(t)-\left(\gamma^s_i+\delta^s_i+\mu_i\right)I^s_i(t)-\vartheta(t)I^s_i(t),\\
\frac{dR_i(t)}{dt}=&\gamma^a_iI^a_i(t)+\gamma^s_iI^s_i(t)-\mu_iR_i(t),
\end{aligned}
\end{equation}
where $S_i(0)\geq 0, E_i (0) \geq 0, I^a_i (0) \geq 0, I^s_i (0) \geq 0,$ and $ R_i (0) \geq 0$ for all $i = 1, 2,\cdots , n$ are given initial sizes of
variables.
The main objective of the optimal control problem is to compare the costs of these intervention strategies and their effectiveness in the fight against the disease. To do this, we need to investigate the optimal level of effort that would be required to control the disease or to minimize the total number of infected individuals. For this, we use the following objective function :
\begin{equation}\label{opt_fun}
\begin{aligned}
J(\omega_i,\theta_i,\vartheta_i)=&\int_{0}^{T}\left(E_i(t)+I^a_i(t)+I^s(t)+\frac{A^1_i}{2}\left(\omega_i(t)\right)^2+\frac{A^2_i}{2}\left(\theta_i(t)\right)^2+\frac{A^3_i}{2}\left(\vartheta_i(t)\right)^2\right)dt,
\end{aligned}
\end{equation}
where  parameters $A^1_i>0, A^2_i>0,
$ and $A^3_i>0,$ for all $i = 1, 2,\cdots, n$ are the cost coefficients at time $t$ associated with applied controls , $T$ is the final
time. On the other hands, we try to find the optimal controls $\omega^*_i,\,\theta^*_i,$, and $\vartheta^*_i$ such that
\begin{equation*}
J(\omega^*_i,\theta^*_i,\vartheta^*_i)=\min_{(\omega_i,\theta_i,\vartheta_i)\in C_{ad}}J(\omega_i,\theta_i,\vartheta_i)
\end{equation*}
where $C_{ad}$ is the set of acceptable control defined by \[C_{ad}=\left\{(\omega_i,\theta_i,\vartheta_i):0\leq \omega_i(t)\leq 1; 0\leq \theta_i(t)\leq 1;0\leq \vartheta_i(t)\leq 1, t\in[0,T]\right\}\]

\subsection*{Existence of an optimal control}\label{opti_2}
\begin{theorem}
	Subject to the control system \eqref{con_mod} with initial conditions, there exists  optimal controls $\omega^*_i,\,\theta^*_i,$, and $\vartheta^*_i$ such that
	\[J(\omega^*_i,\theta^*_i,\vartheta^*_i)=\min_{(\omega_i,\theta_i,\vartheta_i)\in
		C_{ad}}J(\omega_i,\theta_i,\vartheta_i),\]
	if  the following conditions are satisfied:
	\begin{enumerate}[(1)]
		\item The set of controls and corresponding state variables are nonempty.
		\item The control set $C_{ad}$ is compact (convex and closed).
		\item The right-hand side of the state system \eqref{con_mod} is bounded by a linear function in the state and
		control variables.
		\item The integrand $L\left(S_i, E_i, I^a_i, I^s_i,R_i\right) = E_i(t)+I^a_i(t)+I^s(t)+\frac{A^1_i}{2}\left(\omega_i(t)\right)^2+\frac{A^2_i}{2}\left(\theta_i(t)\right)^2+\frac{A^3_i}{2}\left(\vartheta_i(t)\right)^2$ of the
		objective functional is convex on $C_{ad}$ and there exist constants $c_1$ and $c_2$ such that
		\[L\left(S_i, E_i, I^a_i, I^s_i,R_i\right)\geq -3c_1+\frac{c_2}{2}\left(\abs*{\omega_i}^2+\abs*{\theta_i}^2+\abs*{\vartheta_i}^2\right)\]
	\end{enumerate}
\end{theorem}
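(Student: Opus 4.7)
The plan is to verify, one by one, the four hypotheses stated in the theorem (which together form the classical Fleming–Rishel existence criterion for optimal control problems \cite{fle1975}) and then invoke that criterion. The structure of the argument will mirror similar existence proofs for age-structured epidemic models, but tailored to the controlled system \eqref{con_mod} and the admissible set $C_{ad}$.

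First I would handle the two easiest conditions. For condition (1), nonemptiness of the set of admissible controls with associated state trajectories, I would appeal to the local Lipschitz continuity of the right-hand side of \eqref{con_mod} in the state variables (uniformly in the bounded measurable controls $\omega_i,\theta_i,\vartheta_i\in[0,1]$) together with the boundedness argument already established in Theorem \ref{pos_theo}; this yields, via a standard Picard/Carathéodory-type existence result (see \cite{bir1989,dip2009}), a unique global nonnegative bounded solution for any admissible control, so the feasible set is nonempty. For condition (2), the admissible set
\[
C_{ad}=\left\{(\omega_i,\theta_i,\vartheta_i)\in (L^\infty[0,T])^3 : 0\le \omega_i,\theta_i,\vartheta_i\le 1\right\}
\]
is a closed, bounded, convex subset of $(L^\infty[0,T])^3$ by construction, hence convex and closed in the topology used by the Fleming–Rishel theorem.

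Next I would verify condition (3), the linear growth bound. Writing the right-hand side of \eqref{con_mod} as $\mathbf f(t,\mathbf x,\mathbf u)$ with $\mathbf x=(S_i,E_i,I^a_i,I^s_i,R_i)_{i=1}^n$ and $\mathbf u=(\omega_i,\theta_i,\vartheta_i)_{i=1}^n$, the only potentially troublesome terms are the nonlinear incidence terms $\beta(1-\omega_i)S_i\sum_j\lambda_{ij}(I^a_j+\eta_jI^s_j)/N_j$. Because $N_j\ge S_j^0\cdot(\text{const})>0$ on the invariant region $\Omega$ from Theorem \ref{pos_theo} (and $0\le 1-\omega_i\le 1$), each such term is bounded in absolute value by a constant times $S_i(|I^a_j|+|I^s_j|)$, which on the compact absorbing set $\Omega$ is bounded by a linear function of the state (and trivially of the control). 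Collecting constants gives $|\mathbf f(t,\mathbf x,\mathbf u)|\le K(1+|\mathbf x|+|\mathbf u|)$ for some $K>0$, as required. This is the step I expect to require the most care, since one must use the already-established a priori boundedness of the states to linearize the bilinear incidence terms; without Theorem \ref{pos_theo} the bound would only be quadratic.

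Finally, for condition (4), convexity of the integrand
\[
L=E_i+I^a_i+I^s_i+\tfrac{A^1_i}{2}\omega_i^2+\tfrac{A^2_i}{2}\theta_i^2+\tfrac{A^3_i}{2}\vartheta_i^2
\]
in the control variables is immediate because $L$ is a sum of an affine function of the states and a strictly convex (positive-definite diagonal quadratic) function of the controls, since $A^1_i,A^2_i,A^3_i>0$. The coercivity-type lower bound is then obtained by discarding the nonnegative state-dependent part and setting $c_1=0$ and $c_2=\min_i\{A^1_i,A^2_i,A^3_i\}$, which yields $L\ge \tfrac{c_2}{2}(|\omega_i|^2+|\theta_i|^2+|\vartheta_i|^2)\ge -3c_1+\tfrac{c_2}{2}(|\omega_i|^2+|\theta_i|^2+|\vartheta_i|^2)$. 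With all four conditions verified, the Fleming–Rishel theorem yields the existence of optimal controls $\omega_i^*,\theta_i^*,\vartheta_i^*\in C_{ad}$ minimizing $J$, completing the proof; the characterization as the max–min projection formulas then follows separately from Pontryagin's Maximum Principle applied to the Hamiltonian \eqref{hami} (the content of Theorem \ref{the_opti}).
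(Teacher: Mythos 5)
Your proposal is correct and follows essentially the same route as the paper: verify the four Fleming--Rishel conditions (nonemptiness via classical ODE existence, convexity/closedness of $C_{ad}$, linear growth of the right-hand side using the a priori boundedness of the states to tame the bilinear incidence terms, and convexity plus the coercivity bound on $L$) and then invoke Theorem 2.1 of Chapter 3 in \cite{fle1975}. Your minor simplifications (taking $c_1=0$ since the state variables are nonnegative, and noting $(I^a_j+\eta_j I^s_j)/N_j\le 1+\eta_j$ rather than bounding $N_j$ below) are harmless variants of the paper's choices $c_1=\sup_{t\in[0,T]}(E_i,I^a_i,I^s_i)$ and the matrix estimate $\norm{F}\le\Gamma_i+M^{\prime}_i(\norm{X_i}+\norm{U_i})$.
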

\begin{proof}
	By results discussed in \cite{fle1975}, the existence of the optimal control can  easily be obtained and its characteristic can be discussed.  First, we shall prove that the set of controls and  corresponding state variables are nonempty by using the simplified form of the existence result in \cite{dip2009}.\\
	Let $Y_1 = S_i, Y_2 = E_i,  Y_3 = I^a_i, Y_4 = I^s_i$ and $Y_5 = R_i $ then from system \eqref{con_mod}, we have that $Y^{\prime}_k =
	F_{Y_k}(t, Y_1, Y_2, Y_3, Y_4, Y_5)$ with $k = 1, 2, 3, 4, 5$ where $F_{Y_k}(t, Y_1, Y_2, Y_3, Y_4, Y_5)$ can be obtained from the right
	hand side of equations of system \eqref{con_mod}. Further, let $\omega, \theta$ and $\vartheta$ are constant control parameters. Since all the model parameters are constants and $Y_1, Y_2, Y_3,  Y_4$ and $Y_5$ are continuous. Therefore $F_{S_i}, F_{E_i}, F_{I^a_i}, F_{I^s_i}$ and $F_{R_i}$ are also continuous. Moreover, the partial derivatives $\frac{\partial F_{Y_k}}{\partial Y_k}$
	are also continuous for	all $k = 1, 2, 3, 4, 5.$ Therefore, a unique solution $(S_i, E_i, I^a_i, I^s_i, R_i)$ is  satisfying the given initial conditions for the variables.
	Hence, there exists a  set of controls and the corresponding state variables. Thus, the first condition is satisfied.

We have that all right-hand sides terms of system \eqref{con_mod} are continuous and also bounded above by a sum of bounded controls and state variables. It also can be written as a linear function of $\omega,\theta$ and $\vartheta$ having the time depend coefficients and state variables.
	From the system \eqref{con_mod}, we obtain
	\[\frac{dN_i(t)}{dt}\leq \Gamma_i-\mu_i N_i\implies \lim \sup_{t\rightarrow \infty}N_i(t)\leq \frac{\Gamma_i}{\mu_i}. \]
	Since,  all solutions for the controlled system \eqref{con_mod} are bounded.  Therefore, there exist positive constants $Z_1,Z_2,Z_3,Z_4,Z_5$ such that for all $t\in[0,T]:$
	$S_i(t)\leq Z_1,E_i(t)\leq Z_2,I^a_i(t)\leq Z_3, I^s_i(t)\leq Z_4,$ and $R_i(t)\leq Z_5.$\\
	Further, we consider
	\begin{equation*}
	\begin{aligned}
	F_{S_i}\leq&\Gamma_i+\omega_i(t) S_i(t) \sum_{j=1}^{n}\lambda_{ij}\left(\frac{I^a_j(t)+(1-\eta_j) I^s_j(t)}{N_j(t)}\right),\\
	F_{E_i}\leq&\beta S_i(t)(2+\eta_j)\sum_{j=1}^{n}\lambda_{ij}-\beta S_i(t)\omega_i(t) \sum_{j=1}^{n}\lambda_{ij}\left(\frac{I^a_j(t)+(1-\eta_j) I^s_j(t)}{N_j(t)}\right)-\theta_i(t)E_i(t),\\
	F_{I^a_i}\leq&(1-\sigma)\alpha_iE_i(t)-\vartheta(t)I^a_i(t),\\
	F_{I^s_i}\leq&\sigma\alpha_iE_i(t)-\vartheta(t)I^s_i(t),\\
	F_{R_i}\leq&\gamma^a_iI^a_i(t)+\gamma^s_iI^s_i(t).
	\end{aligned}
	\end{equation*}
	Therefore, we can rewrite the control system \eqref{con_mod} in the compact form as:
	\begin{equation*}
	F(S_i, E_i, I^a_i, I^s_i, R_i)\leq \Lambda_i+A^1_iX^1_i(t)-B^1_iU_i(t),
	\end{equation*}
	where $	F(S_i, E_i, I^a_i, I^s_i, R_i)=\left(\begin{array}{c}
	F_{S_i}\\
	F_{E_i}\\
	F_{I^a_i}\\
	F_{I^s_i}\\
	F_{R_i}
	\end{array}\right),	\Lambda_i=\left(\begin{array}{c}
	\Gamma_i\\
	0\\
	0\\
	0\\
	0
	\end{array}\right),X_i(t)=\left(\begin{array}{c}
	S_i\\
	E_i\\
	I^a_i\\
	I^s_i\\
	R_i
	\end{array}\right),$  $U_i(t)=\left(\begin{array}{c}
	\omega_i\\
	\theta_i\\
	\vartheta_i
	\end{array}\right)$
	$A^1_i=\left(\begin{array}{ccccc}
	0&0&0&0&0\\
	\beta(1+\eta)\sum_{j=1}^{n}\lambda_{ij}&0&0&0&0\\
	0&(1-\sigma)\alpha_i&0&0&0\\
	0&\sigma\alpha_i&0&0&0\\
	0&0&\gamma^a_i&\gamma^s_i&0
	\end{array}\right)$
	and $B_i=\left(\begin{array}{ccc}
	- S_i(t) \sum_{j=1}^{n}\lambda_{ij}\left(\frac{I^a_j(t)+\eta I^s_j(t)}{N_j(t)}\right)&0&0\\
	S_i(t) \sum_{j=1}^{n}\lambda_{ij}\left(\frac{I^a_j(t)+\eta I^s_j(t)}{N_j(t)}\right)&E_i(t)&0\\
	0&0&I^a_i(t)\\
	0&0&I^s_i(t)\\
	0&0&0\\
	\end{array}\right).$
	It gives a linear function of control and state variable vectors. Since $\abs*{\abs*{\Lambda_i}}\leq \Gamma_i,$ therefore,  we can obtain 
	\begin{equation}
	\begin{aligned}
	\abs*{\abs*{F(S_i, E_i, I^a_i, I^s_i, R_i)}}\leq& \abs*{\abs*{\Lambda_i}}+\abs*{\abs*{A_i}}\abs*{\abs*{X_i(t)}}+\abs*{\abs*{B_i}}\abs*{\abs*{U_i(t)}}\\
	&\Gamma_i+M^{\prime}_i\left(\abs*{\abs*{X_i(t)}}+\abs*{\abs*{U_i(t)}}\right)
	\end{aligned}
	\end{equation}
	where $M^{\prime}_i=\max\left(\abs*{\abs*{A_i}},\abs*{\abs*{B_i}}\right).$ Thus, we have that the right-hand side is bounded by a sum of state and control vectors. Hence, third condition is satisfied.\\
	It is obvious that the objective function \eqref{opt_fun} is convex in $C_{ad}.$ Further, we have that the state variables are bounded. So, let  $c_1=\sup_{t\in[0,T]}\left(E_i(t),I^a_i(t),I^s_i(t)\right)$ and let there exist a constant $c_2$ such that $c_2=\inf\left(A^1_i,A^2_i,A^3_i\right).$ Therefore, we obtain $L\left(S_i, E_i, I^a_i, I^s_i,R_i\right)\geq-3c_1+\frac{c_2}{2}\left(\abs*{\omega_i}^2+\abs*{\theta_i}^2+\abs*{\vartheta_i}^2\right).$ Thus, the forth condition also hold. Hence,  from results stated in \cite{fle1975} (Theorem 2.1 of Chapter 3), we conclude that an optimal control exit for the controlled system \eqref{con_mod}. 
\end{proof}

\subsection*{Characterization of the optimal control}
Here, we compute the necessary condition for the optimal control by using the pontryagin's maximum principle \cite{pon1962}. The idea of Pontryagin’s maximum principle makes known to the adjoint function to the associated system to the objective functional resulting in the formulation of a function called the Hamiltonian. The principle interprets into a problem of minimizing Hamiltonian  $H_i(t)$ at time $t$ defined by

\begin{equation}\label{hami}
H_i(t)=E_i(t)+I^a_i(t)+I^s(t)+\frac{A^1_i}{2}\left(\omega_i(t)\right)^2+\frac{A^2_i}{2}\left(\theta_i(t)\right)^2+\frac{A^3_i}{2}\left(\vartheta_i(t)\right)^2+\sum_{k=1}^{5}\psi^k_i(t)f_k\left(S_i, E_i, I^a_i, I^s_i,R_i\right),
\end{equation}
where $\psi^k_i$ is the $k^{th}$ adjoint variable at time $t$ and $f_k$ is the right-hand side function of the control system  \eqref{con_mod} corresponding to the $k^{th}$ state at  time $t.$
\begin{theorem}\label{the_opti}
Given the optimal controls $\omega^*_i,\theta^*_i, \vartheta^*_i$ and the solutions $S_i, E_i, I^a_i, I^s_i$ and $R_i$ associated to control system \eqref{con_mod}, then there exist adjoint variables $\psi^k_i$ for $k=1,2,3,4,5$ satisfying
\begin{equation}
\begin{aligned}
(\psi^1_i)^\prime==&\psi^1_i\mu_i+(\psi^1_i-\psi^2_i)\beta\left(1-\omega_i(t)\right)\sum_{j=1}^{n}\lambda_{ij}\left(\frac{I^a_j(t)+\eta I^s_j(t)}{N_j}\right),\\
(\psi^2_i)^\prime==&\left(\alpha_i+\mu_i+\theta_i(t)\right)\psi^2_i-(1-\sigma)\alpha_i\psi^3_i-\sigma\alpha_i\psi^4_i,\\
(\psi^3_i)^\prime==&(\psi^1_i-\psi^2_i)\beta(1-\omega_i(t))S_i(t)\sum_{j=1}^{n}\left(\frac{\lambda_{ij}}{N_j}\right)+\left(\gamma^a_i+\delta^a_i+\mu_i+\vartheta(t)\right)\psi^3_i+\gamma^a_i\psi^5_i,\\
(\psi^4_i)^\prime==&(\psi^1_i-\psi^2_i)\beta(1-\omega_i(t))S_i(t)(1-\eta_j)\sum_{j=1}^{n}\left(\frac{\lambda_{ij}}{N_j}\right)+\left(\gamma^s_i+\delta^s_i+\mu_i+\vartheta(t)\right)\psi^4_i+\gamma^s_i\psi^5_i,\\
(\psi^5_i)^\prime==&\mu_i\psi^5_i,
\end{aligned}
\end{equation}	
with the tranversality conditions at time $T:\psi^1_i(T)=0,\psi^2_i(T)=1,\psi^3_i(T)=1,\psi^4_i(T)=1,\psi^5_i(T)=0.$ Then the optimal controls  $\omega^*_i,\theta^*_i,$ and $ \vartheta^*_i$ are given by \[\omega^*_i=\max \left(0,\min \left( 1,\frac{\left(\psi^2_i-\psi^1_i\right)\beta S_i(t)}{A^1_i}\sum_{j=1}^{n}\lambda_{ij}\left(\frac{I^a_j(t)+(1-\eta_j) I^s_j(t)}{N_j}\right)\right)\right),\]
\[\theta^*_i=\max \left(0,\min \left( 1,\frac{\psi^2_iE_i(t)}{A^2_i}\right)\right),\,\,\text{and}\,\,\vartheta^*_i=\max \left(0,\min \left( 1,\frac{\psi^3_i I^a_i(t)+\psi^4_i I^s_i(t)}{A^3_i}\right)\right).\]
\end{theorem}

\begin{proof}
	From equation \eqref{hami}, we have the Hamiltonian at time $t$:
	\[H_i(t)=E_i(t)+I^a_i(t)+I^s(t)+\frac{A^1_i}{2}\left(\omega_i(t)\right)^2+\frac{A^2_i}{2}\left(\theta_i(t)\right)^2+\frac{A^3_i}{2}\left(\vartheta_i(t)\right)^2+\sum_{k=1}^{5}\psi^k_i(t)f_k\left(S_i, E_i, I^a_i, I^s_i,R_i\right)\]
	where  \begin{equation}
	\begin{aligned}
	f_1\left(S_i, E_i, I^a_i, I^s_i,R_i\right)=&\Gamma_i-\beta(1-\omega_i(t)) S_i(t) \sum_{j=1}^{n}\lambda_{ij}\left(\frac{I^a_j(t)+(1-\eta_j) I^s_j(t)}{N_j}\right)-\mu_i S_i,\\
	f_2\left(S_i, E_i, I^a_i, I^s_i,R_i\right)=&\beta S_i(t)(1-\omega_i(t)) \sum_{j=1}^{n}\lambda_{ij}\left(\frac{I^a_j(t)+(1-\eta_j) I^s_j(t)}{N_j}\right)-(\alpha_i+\mu_i)E_i(t)-\theta_i(t)E_i(t),\\
	f_3\left(S_i, E_i, I^a_i, I^s_i,R_i\right)=&(1-\sigma)\alpha_iE_i(t)-\left(\gamma^a_i+\delta^a_i+\mu_i\right)I^a_i(t)-\vartheta(t)I^a_i(t),\\
	f_4\left(S_i, E_i, I^a_i, I^s_i,R_i\right)=&\sigma\alpha_iE_i(t)-\left(\gamma^s_i+\delta^s_i+\mu_i\right)I^s_i(t)-\vartheta(t)I^s_i(t),\\
	f_5\left(S_i, E_i, I^a_i, I^s_i,R_i\right)=&\gamma^a_iI^a_i(t)+\gamma^s_iI^s_i(t)-\mu_iR_i(t),
	\end{aligned}
	\end{equation}	
	for time $t\in[0,T],$ adjoint equations and transversality conditions can be obtained by using  Pontryagin's maximum principle, such that
	
	\begin{equation}
	\begin{aligned}
	(\psi^1_i)^\prime=-\frac{dH_i}{dS_i}=&\psi^1_i\mu_i+(\psi^1_i-\psi^2_i)\beta\left(1-\omega_i(t)\right)\sum_{j=1}^{n}\lambda_{ij}\left(\frac{I^a_j(t)+(1-\eta_j) I^s_j(t)}{N_j}\right),\\
	(\psi^2_i)^\prime=-\frac{dH_i}{dE_i}=&\left(\alpha_i+\mu_i+\theta_i(t)\right)\psi^2_i-(1-\sigma)\alpha_i\psi^3_i-\sigma\alpha_i\psi^4_i,\\
	(\psi^3_i)^\prime=-\frac{dH_i}{dI^a_i}=&(\psi^1_i-\psi^2_i)\beta(1-\omega_i(t))S_i(t)\sum_{j=1}^{n}\left(\frac{\lambda_{ij}}{N_j}\right)+\left(\gamma^a_i+\delta^a_i+\mu_i+\vartheta(t)\right)\psi^3_i+\gamma^a_i\psi^5_i,\\
	(\psi^4_i)^\prime=-\frac{dH_i}{dI^s_i}=&(\psi^1_i-\psi^2_i)\beta(1-\omega_i(t))S_i(t)(1-\eta_j)\sum_{j=1}^{n}\left(\frac{\lambda_{ij}}{N_j}\right)+\left(\gamma^s_i+\delta^s_i+\mu_i+\vartheta(t)\right)\psi^4_i+\gamma^s_i\psi^5_i,\\
	(\psi^5_i)^\prime=-\frac{dH_i}{dR_i}=&\mu_i\psi^5_i,
	\end{aligned}
	\end{equation}
	with the tranversality conditions at time $T$  $\psi^1_i(T)=0,\psi^2_i(T)=1,\psi^3_i(T)=1,\psi^4_i(T)=1,\psi^5_i(T)=0.$ For $t\in[0,T]$, the optimal controls  $\omega^*_i,\theta^*_i$ and $\vartheta^*_i$ can be obtained by solving the following optimality  conditions 
	\[\frac{dH_i}{\omega_i}=0,\frac{dH_i}{\theta_i}=0,\, \text{and}\, \frac{dH_i}{\vartheta_i}=0.\]
	Therefore, we obtain 
	\begin{equation}\label{opt_con}
	\begin{aligned}
	A^1_i\omega_i(t)+\left(\psi^1_i-\psi^2_i\right)\beta S_i(t) \sum_{j=1}^{n}\lambda_{ij}\left(\frac{I^a_j(t)+(1-\eta_j) I^s_j(t)}{N_j}\right)&=0,\\
	A^2_i\theta_i(t)-\psi^2_iE_i(t)&=0,\\
	A^3_i\vartheta_i(t)-\psi^3_i I^a_i(t)-\psi^4_i I^s_i(t)&=0.
	\end{aligned}
	\end{equation}
	Form system of equations \eqref{opt_con}, we obtain
	\[\omega_i(t)=\frac{\left(\psi^2_i-\psi^1_i\right)\beta S_i(t)}{A^1_i}\sum_{j=1}^{n}\lambda_{ij}\left(\frac{I^a_j(t)+(1-\eta_j) I^s_j(t)}{N_j}\right),\]
	\[\theta_i(t)=\frac{\psi^2_iE_i(t)}{A^2_i},\,\vartheta_i(t)=\frac{\psi^3_i I^a_i(t)+\psi^4_i I^s_i(t)}{A^3_i}.\]
	Using the bounds of the controls which is given by $C_{ab},$ we obtain the following optimal controls 
	\[\omega^*_i=\max \left(0,\min \left( 1,\frac{\left(\psi^2_i-\psi^1_i\right)\beta S_i(t)}{A^1_i}\sum_{j=1}^{n}\lambda_{ij}\left(\frac{I^a_j(t)+(1-\eta_j) I^s_j(t)}{N_j}\right)\right)\right),\]
	\[\theta^*_i=\max \left(0,\min \left( 1,\frac{\psi^2_iE_i(t)}{A^2_i}\right)\right),\,\,\vartheta^*_i=\max \left(0,\min \left( 1,\frac{\psi^3_i I^a_i(t)+\psi^4_i I^s_i(t)}{A^3_i}\right)\right)\]
\end{proof}

\section{Numerical Simulations}\label{num_simu}
Here, we illustrate the numerical results to our proposed multi-group  model \eqref{mod}.
\subsection{Comparison of contact pattern for four countries (India, Italy, Brazil and USA) at different locations}
On  August 20, 2020, the Ministry of Home Affairs (MHA), India, released the rules for activities allowed in Unlock $4.0$ \cite{unl4}. It states that lockdown will continue in the high-risk zones till $30^{th}$ September 2020, the same as the previous month (August). Outside the containment region, some events were allowed in a specific manner. The metro rail was permitted to be restarted in the categorized mode from September 7, 2020. The gathering of limited peoples was allowed; fifty people were allowed in marriage functions, twenty people were legalized in funereal/last rites ceremonies, and up to one hundred people were acceptable in religious, political, sports, and academic functions. Face masks wearing was mandatory in markets, workplaces, and during traveling. On September 30, 2020, the MHA delivered the guidelines for activities that were allowed in Unlock $5.0$ \cite{unl5} and announced that lockdown should continue in force stringently in the containment region still September 30, 2020 \cite{unl6}. States and Union Territories will be able to take some necessary actions from $15^{th}$ October to reopen the schools in a group manner. Cinema halls could be reopened from $10^{th}$ October 2020 with the half seating of their capacity, and Swimming pools were allowed to be used for training of sportsperson \cite{unl5}. On $27^{th}$ October, 2020 the MHA announced the instructions for Unlock $6.0$ \cite{unl5e,mha2710,mha2710a,unl6} and said the guidelines of Unlock 5.0 would continue to be employed in November too [13]. Out of containment zones, some states have allowed more activities and partial reopening of schools \cite{unl6a}. Thus, the Ministry of Home Affairs released separate guidelines to each Unlock every month. The movements of people vary according to the guidelines of every Unlock. Therefore, the contribution of social contacts in the disease transmissions depends on the Unlock.

First of all, we compare the social contact structure of people of four countries (India, Italy, Brazil, and USA). The data for the age distributions of people for these countries are collected from the Population Pyramid website \cite{pop_pyra} as depicted in Fig.  \ref{age_dis}. In these age distributions, after selecting for the different age groups, the age of each individual is distributed uniformly from the associated range.   The social contact matrices for different locations (home, work, school, others) are found in Prem et. al \cite{ pre2017} that are obtained from surveys and Bayesian imputation and represented by Fig. \ref{soci_con}. \\
This comparison aims to highlight the differences between their contact patterns and emphasize their effects on the transmission of infectious diseases. Fig. \ref{age_dis} shows the percentage of the population (separated by gender) in the five-year age group terminating at the age of eighty. The Taj Mahal dome shape of age distributions for India and Brazil is typical of a demographic transition. The narrower base of Italian populations is characteristic of aging people at or near sub-replacement fertility. The rectangle shape of the age distribution of the USA population represents the stationary distribution.   \\
	The second row of Fig. \ref{soci_con} shows the contact between different age groups in the household setting, represented by matrices $\lambda^h_{ij}$ where darker magenta colored squares indicate a more significant number of contacts. As discussed in \cite{pre2017}, the structures common to India, Italy, and Brazil are the diagonal dominance, reflecting communication between same age groups (i.e., siblings and partners) and the prominent off diagonals, separated by the mean inter-generation gap, reflecting connections between different age groups (i.e., children and parents). However, we have observed some other types of behavior in contact patterns at the home location in the case of USA. The principal difference in the contact matrix for India is the presence of a third dominant diagonal, again separated by the mean inter-generation gap, reflecting the incidence of three-generation households. It quantifies the significant contact between children and grandparents and the possibility of substantial transmission of contagion from third to the first generation. Such connections are smaller in Brazil and negligible in Italy.  \\
The third row of Fig. \ref{soci_con} shows the contact patterns $\lambda^s_{ij}$ between age groups at the school location in which we see that there is main diagonal dominant till the school ages (strongly assortative, with primary contacts within the school-going children), reflecting contact within school-age groups, and other smaller contacts reflecting the connection between student and teacher. As observed, this structure is typical in all four countries.   The fourth row of Fig. \ref{soci_con} shows the social contact matrices $\lambda^w_{ij}$ between age groups in the workplace. In disparity to home and school location, the contact pattern at the workplace is more homogeneous across age groups in all four countries, indicating that the workplace contributes to the transmission of contagion between age groups that are, otherwise, largely separated from each other in the household. The boundaries of these age groups are more prominent in India and USA than in Italy and Brazil.  The last row of Fig. \ref{soci_con} shows the matrices $\lambda^o_{ij}$  for contact patterns at other locations. It shows that the contacts are strongly assortative (forcefully contact within age groups) for India, Brazil, and USA,  reflecting the preferential social connection within age groups in this sphere, but otherwise do not show systematic patterns.  However, the contact pattern matrix for Italy reflects the non-preferential social connection. \\
In summary, in the case of India, the home location provides the main channel to disease transmission between three generations, and the workplace provides the main channel to (largely homogeneous) disease transmission between working-age groups. The school is the main channel of infection within children and between children and adult teachers to a smaller extent. Other spheres of contact, due to the assortative mixing, contribute to transmission within age groups.

\begin{figure}[H]
	\centering
	\includegraphics[width=4cm,height=7cm]{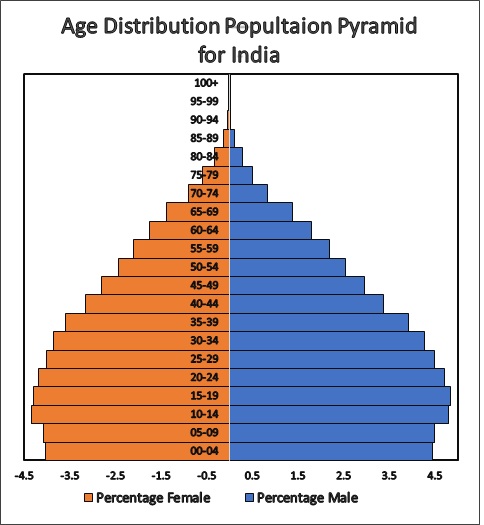}
	\includegraphics[width=4cm,height=7cm]{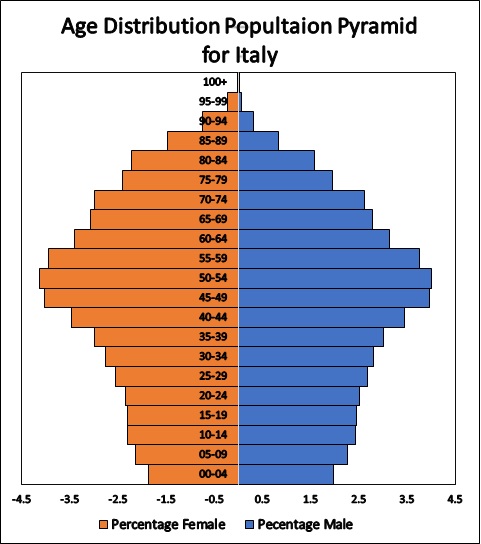}
	\includegraphics[width=4cm,height=7cm]{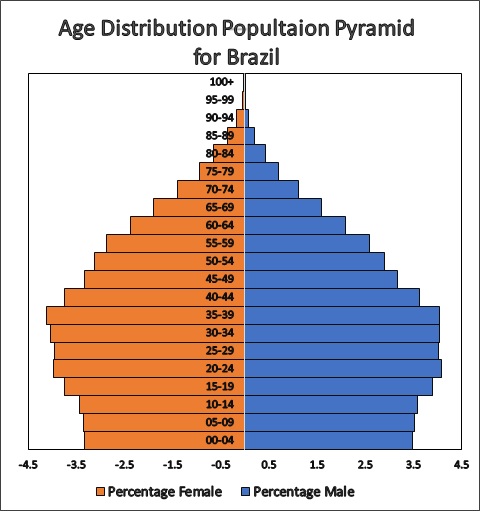}
	\includegraphics[width=4cm,height=7cm]{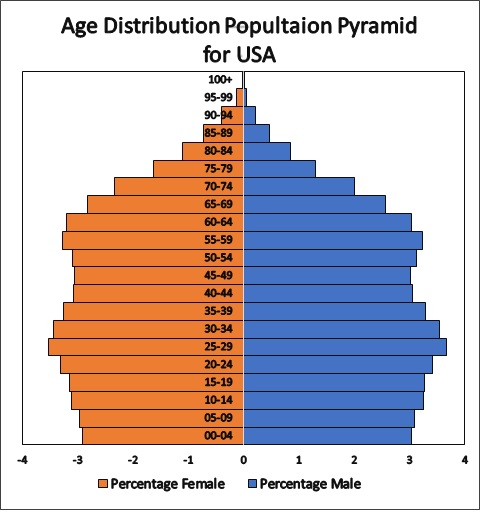}
	\caption{Age distribution population pyramids for India, Italy, Brazil and USA.}\label{age_dis}
\end{figure}

\begin{figure}[H]
	\centering
	\includegraphics[width=4cm,height=4cm]{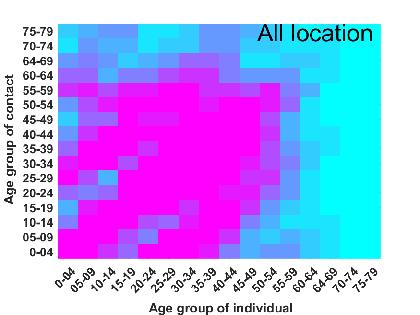}
	\includegraphics[width=4cm,height=4cm]{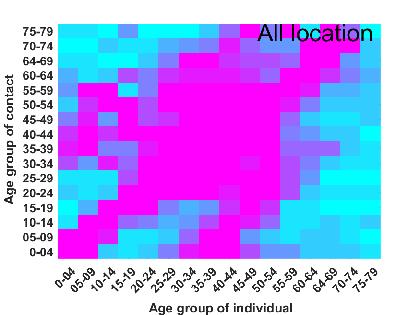}
	\includegraphics[width=4cm,height=4cm]{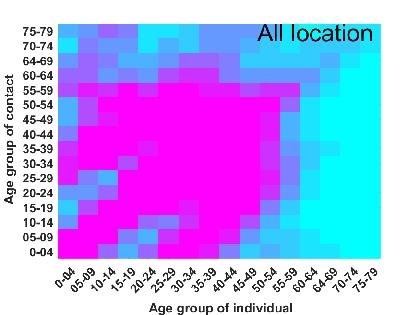}
	\includegraphics[width=4cm,height=4cm]{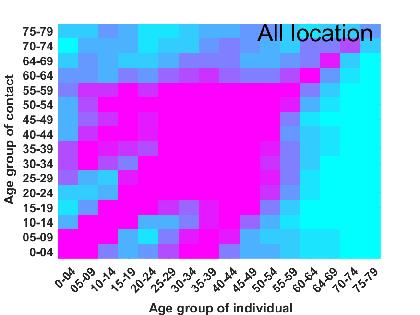}\\
	
	\includegraphics[width=4cm,height=4cm]{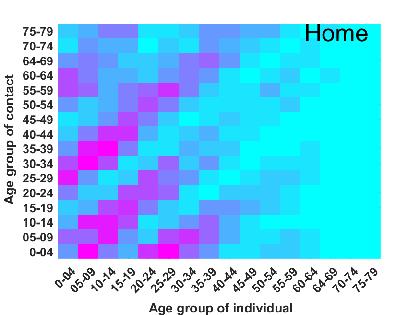}
	\includegraphics[width=4cm,height=4cm]{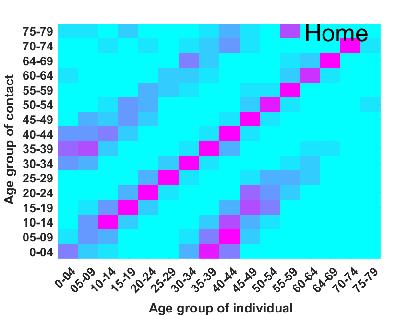}
	\includegraphics[width=4cm,height=4cm]{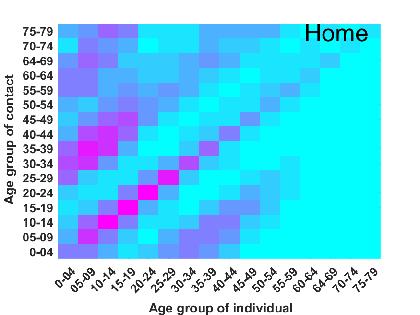}
	\includegraphics[width=4cm,height=4cm]{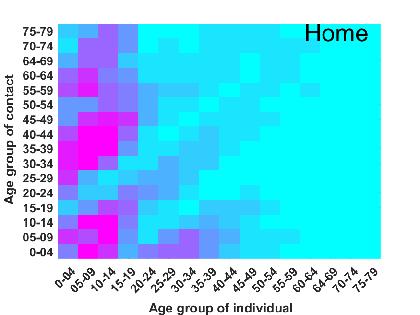}\\
	
	\includegraphics[width=4cm,height=4cm]{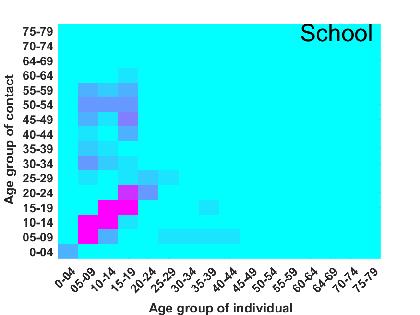}
	\includegraphics[width=4cm,height=4cm]{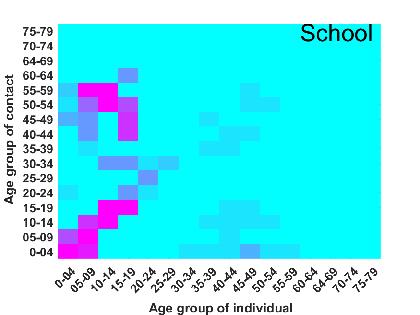}
	\includegraphics[width=4cm,height=4cm]{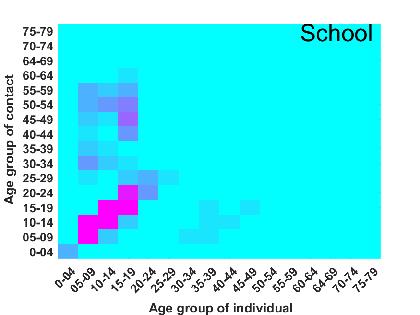}
	\includegraphics[width=4cm,height=4cm]{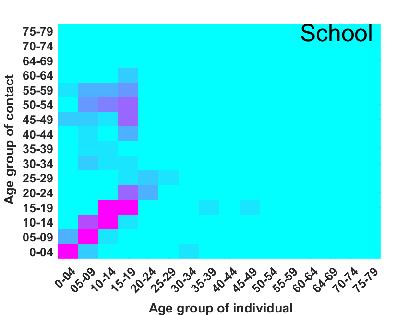}\\

	\includegraphics[width=4cm,height=4cm]{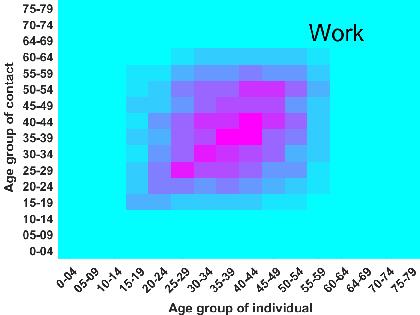}
	\includegraphics[width=4cm,height=4cm]{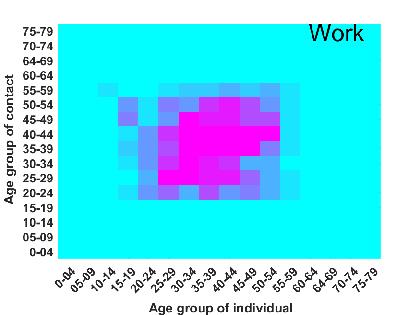}
	\includegraphics[width=4cm,height=4cm]{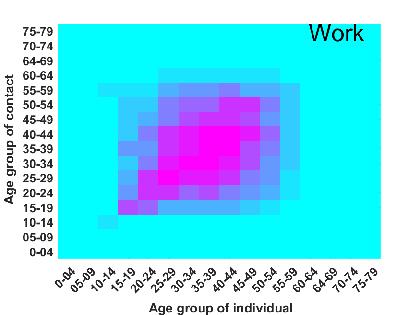}
	\includegraphics[width=4cm,height=4cm]{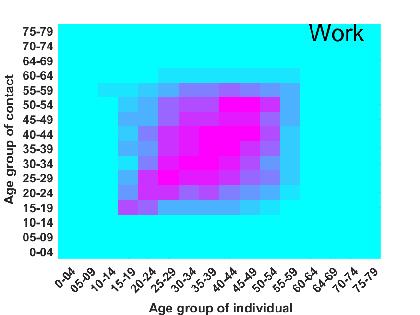}\\

	\includegraphics[width=4cm,height=4cm]{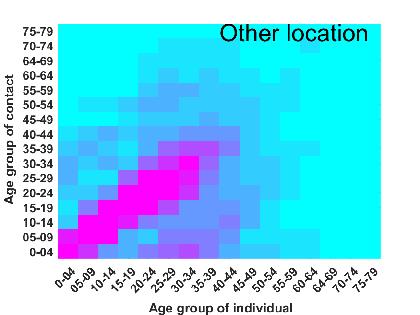}
	\includegraphics[width=4cm,height=4cm]{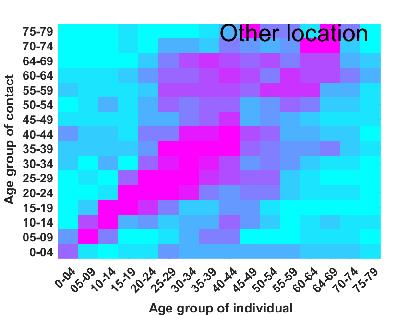}
	\includegraphics[width=4cm,height=4cm]{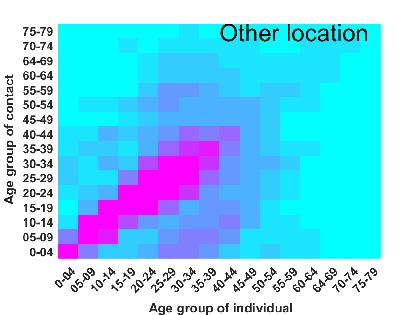}
	\includegraphics[width=4cm,height=4cm]{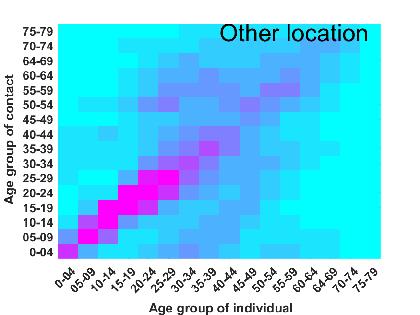}\\

	\caption{The social contact matrices of various locations (home, school, workplaces, and other locations) where the first column represents the matrices for India, the second column represents the matrices for Italy, the third column shows the matrices for Brazil and the last column  for USA.}\label{soci_con}
\end{figure}

Social mixing patterns vary according to locations, including homes, workplaces, schools, and other locations (markets, worship places, etc.). Therefore, we used the method set out in Prem et al. \cite{pre2017} which takes accounts for these differences and obtain the location-specific social contact matrices for various countries. In a standard setting, the average number of contacts made at all of these locations contributes to the overall mixing pattern in the population, i.e., the social contact matrices of the first row of Fig. \ref{soci_con} can be used for standard setting (before the outbreak). In an outbreak duration, different control measures (non-pharmaceutical interventions) are applied at reduced social mixing in different situations to lower the overall infection spreading in the community.   We used the weight coefficients between $0$ and $1$ to incorporate each intervention scenario from these building block matrices to simulate the effects of various control measures applied at reduced social mixing.

\subsection{Model fitting and parameter estimation: A case study of COVID-19 in India}
Since, from September 1, 2020, to December 31,  2020, the nationwide lockdown was lifted in a monthly phased manner. Every month, the strictness of lockdown was condensed, and the government of India permitted some more activities for people. On  August 20, 2020, the Ministry of Home Affairs (MHA), India, released the rules for activities allowed in Unlock $4.0$ \cite{unl4}. It states that lockdown will continue in the high-risk zones till $30^{th}$ September 2020, the same as the previous month (August). Outside the containment region, some events were allowed in a specific manner. The metro rail was permitted to be restarted in the categorized mode from September 7, 2020. The gathering of limited peoples was allowed; fifty people were allowed in marriage functions, twenty people were legalized in funereal/last rites ceremonies, and up to one hundred people were acceptable in religious, political, sports, and academic functions. Face masks wearing was mandatory in markets, workplaces, and during traveling. On September 30, 2020, the MHA delivered the guidelines for activities that were allowed in Unlock $5.0$ \cite{unl5} and announced that lockdown should continue in force stringently in the containment region still September 30, 2020 \cite{unl6}. States and Union Territories will be able to take some necessary actions from $15^{th}$ October to reopen the schools in a group manner. Many activities were accepted in a particularly phased manner \cite{unl5}. On $27_{th}$ October 2020, the MHA announced the instructions for Unlock $6.0$ \cite{unl5e,mha2710,mha2710a,unl6} and said, that the guidelines of Unlock 5.0 would continue to be employed in November too. Out of containment zones, some of the states have allowed more activities and partial reopening of schools \cite{unl6a}. Thus, the MHA released separate guidelines to each Unlock every month. \\
The movements of people vary according to the guidelines of every Unlock. Therefore, the contribution of social contacts in the disease transmissions depends on the Unlock. Hence, the social contact pattern was changing continuously, and the transmission rate differed accordingly.  Based on these observations, we consider the following four lockdown lifting phases: (i) September 2020 (ii) October 2020 (iii) November 2020 (iv) December 2020 and estimate the value of the transmission rate of COVID-19 infection for each phase. Therefore, we choose the empirical data of COVID1-19  for India for the period from September 1, 2020, to December 31,  2020 \cite{data}, to estimate the transmission rate of COVID-19 for the considered model. Our model fits the cumulative cases data for the period mentioned earlier, and the value of transmission rate $\beta$ has been estimated. It is noteworthy that the strictness of the intervention strategies can essentially change the disease's transmission dynamics.

Due to  the short study duration of COVID-19 epidemics compared to the human lifespan, we ignore the new recruitment rate in susceptible individuals and natural death rates of all individuals (demographics) i.e., we consider $\Gamma_i=0$ and $\mu_i=0$ for all $i=1,2,\cdots,16.$. We assume that the $40\%$ individuals of age groups $0-19$ years and the $80\%$ individuals of age groups $20-79$ years, are showing symptoms after the incubation period \cite{wang2021}. So, we set $\sigma=0.4$ for  $i=1,2,3,4$ and $\sigma=0.8$ for  $i=5,6,...,16.$ In terms of parametrization for incubation period of COVID-19, a reliable estimate of $\alpha_i$ for COVID-19 is difficult to obtain from observed data of COVID-19 infection. Therefore, we assume $\alpha_i$ to be a constant among all age groups, i.e., $\alpha_i=\alpha$ for $i^{th}$ age group.   Thus, we set $\alpha_i=\alpha=0.196$ (i.e. $1/\alpha (\text{latent period})=5.1$ \cite{lau2020}). For similar reason, further, we aslo assume $\gamma^a_i$ and $\gamma^s_i$ to  be a constant $\gamma^a_i=\gamma^s_i=\gamma_i$  among all age groups, i.e., $\gamma_i =\gamma=\frac{1}{\text{Infectious period}}$ for $i^{th}$ age group.  Therefore, the following system of differential equations  COVID-19 dynamic.

\begin{equation}
\begin{aligned}
\frac{dS_i(t)}{dt}=&-\beta S_i(t) \sum_{j=1}^{n}\lambda_{ij}\left(\frac{I^a_j(t)+(1-\eta_j) I^s_j(t)}{N_j(t)}\right),\\
\frac{dE_i(t)}{dt}=&\beta S_i(t) \sum_{j=1}^{n}\lambda_{ij}\left(\frac{I^a_j(t)+(1-\eta_j) I^s_j(t)}{N_j(t)}\right)-\alpha E_i(t),\\
\frac{dI^a_i(t)}{dt}=&(1-\sigma_i)\alpha E_i(t)-\left(\gamma+\delta^a_i\right)I^a_i(t),\\
\frac{dI^s_i(t)}{dt}=&\sigma_i\alpha E_i(t)-\left(\gamma+\delta^s_i\right)I^s_i(t),\\
\frac{dR_i(t)}{dt}=&\gamma \left(I^a_i(t)+I^s_i(t)\right),
\end{aligned}
\end{equation}
where $i\in \{1, 2, \cdots, 16\}$ and $n=16$ represents the number of age groups for simulations purpose of COVID-19 for India and the social contact matrix $\lambda_{ij}$ is  parameterized as \[\lambda_{ij}=\alpha_h\lambda^h_{ij}+\alpha_s\lambda^s_{ij}+\alpha_w\lambda^w_{ij}+\alpha_o\lambda^o_{ij}.\]
Further, we have that the limited activities are allowed to people in September 2020. Thus, we assume that $30\%$ social contacts at workplaces and $20\%$ contacts at other places of the normal setting are possible and generate the synthetic social contact matrices for workplaces and other places using weight coefficients $\alpha_w=0.3$ and $\alpha_o=0.2$ for September 2020, respectively. In similar manner, we set weight coefficients $\alpha_w=0.5$ and $\alpha_o=0.4$ for October; $\alpha_w=0.6$ and $\alpha_o=0.65$ for November;  $\alpha_w=0.65$ and $\alpha_o=0.75$ for December, 2020  according to guidelines of lockdown lifting.  We collected the real data of COVID-19 for India from September 1, 2020, to December 31, 2020, from the website \cite{data}. Further, we perform our model fitting by using the in-built function \textit{lsqnonlin} in Matlab 2018 to minimize the sum of square function and use the following steps;\\  
\textbf{Algorithm for Least Square Method for Data Fitting:}\\
Step 1. We obtain the cumulative cases of model output at time $t$ by summation of cumulative cases of each age group at time $t$. It is given by   $$I_{cum}(t,\Theta)=\sum_{i=1}^{n}\int_0^t \alpha_iE_i(\tau)d\tau, \,\,\,\,\text{where} \,\theta=\{\beta\}, $$ where $I_{cum}(t,\Theta)$ represents the cumulative confirmed cases at time $t$ and $n$ be the number of age groups.  For the initial values of the population of each age group, we assume that the total initial population of a class could be divided into age-specific compartments according to the age distribution pyramid discussed in Fig. \ref{age_dis}. \\
Step 2. We compute the sum of squares of errors at each time point, given by
$$ SS_k(\Theta)=\sum_{j=1}^m\left(I_{cum}^a(t_j)-I_{cum}(t_j,\Theta)\right)^2,$$
where, $I_{cum}^a(t_j)$ is the actual data at $t_j^{th}$  day and $m$ is the number of data points. $k$ represents the number of iterations performed  such that $SS_k$ is the sum of squares of errors in the $k^{th}$ iteration.\\
Step 3. We compute the value of parameter $\Theta$ such that
$$SS=\min \left\{SS(\Theta)\right\}.$$
Thus, we obtain the value of the transmission rate $(\beta)$ for the best fitting of model output to actual data. The fitting of model output to real data of cumulative confirmed cases is shown in Fig. \ref{data_fit}.  The estimated values of the transmission rate $\beta$ are displayed in Fig. \ref{esti_2}. The basic reproduction number is also estimated according to lockdown lefting in India which has been shown in Fig. \ref{esti_R0}.


\begin{figure}[H]
	\centering
	\includegraphics[width=15cm,height=10cm]{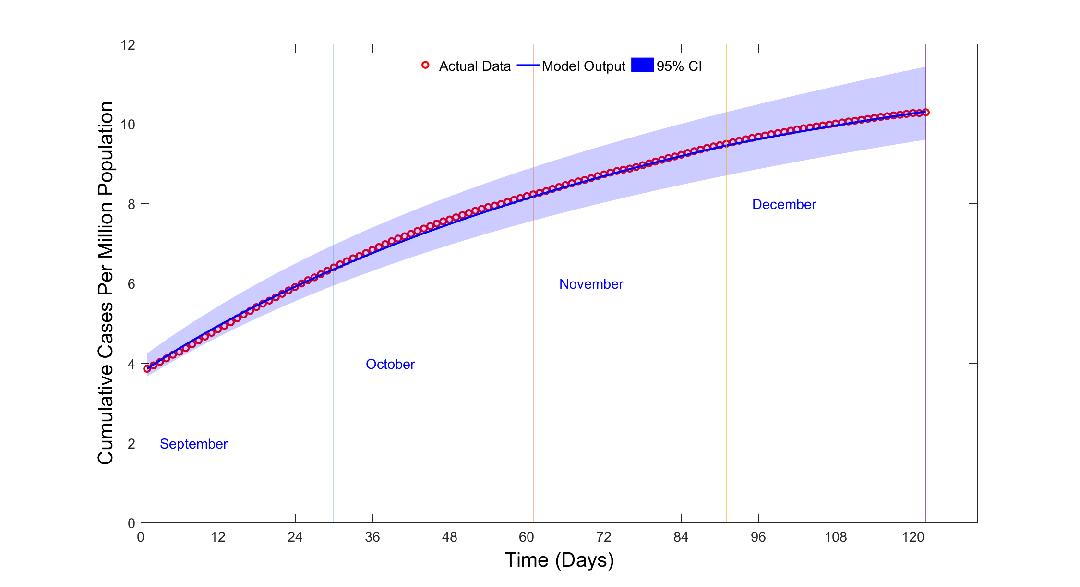}
	\caption{The best fitting of model output to cumulative confirmed cases. The month-wise fitting incorporates  different interventions scenarios applied by MHA in India.}\label{data_fit}
\end{figure}



\begin{figure}[H]
	\centering
	\includegraphics[width=15cm,height=8cm]{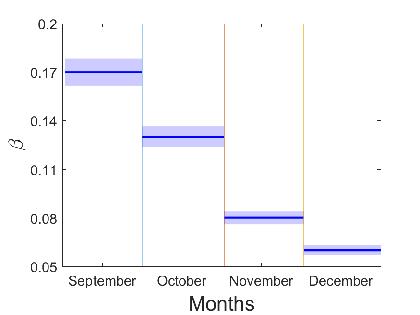}
	\caption{The figure illustrates the numerical values of model parameter $\beta$ for different age groups to the best fitting of model output according to previous months (September, October, November, and December) of 2020 (different-different scenarios of unlocking in India), where light blue shadowed area show the $95\%$ CI for estimated  $\beta$.}\label{esti_2}
\end{figure}

\begin{figure}[H]
	\centering\includegraphics[width=15cm,height=8cm]{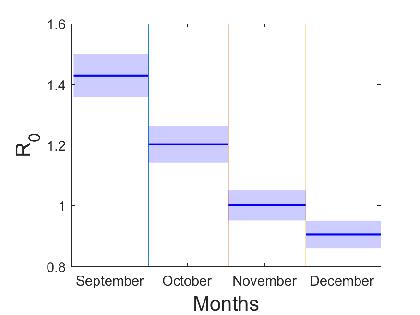}
		\caption{The figure illustrates the estimated basic reproduction number $R_0$ according to different-different scenarios of lockdown lifting in India, where light blue shadowed area show the $95\%$ CI for estimated  $R_0$.}\label{esti_R0}
\end{figure}
The basic reproduction number is constantly reducing, as indicated by the estimated values of $R_0$ in Fig. \ref{esti_R0}. It has also been demonstrated that the $R_0$ is more than one in September, October, and November of 2020, but less than one in December.
\subsection{Age profile of Asymptomatic and Symptomatic Infected Individuals}

\begin{figure}[H]
    \centering
\includegraphics[width=8cm,height=8cm]{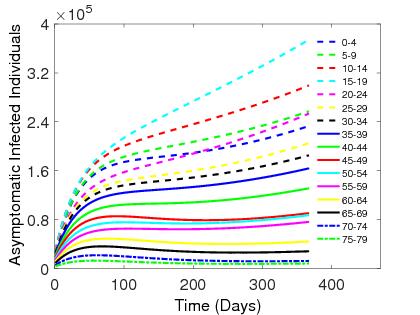}
\includegraphics[width=8cm,height=8cm]{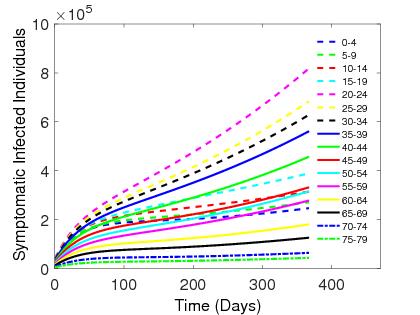}
    \caption{The figure depicts the long term dynamics of infected individuals of different age groups.}
    \label{age_prof}
\end{figure}
The age profile of infected individuals (asymptomatic and symptomatic ) are shown in Fig. \ref{age_prof}, in which we observe that asymptomatic infected individuals of age groups 05-24 years are gradually increasing. The possible reason behind this type of dynamics of infected individuals is the unawareness of children.     The asymptomatic infected individuals of age groups higher than 60 years increase and then decrease to stable levels due to their limited contacts (fewer contacts) to other persons. Moreover, the asymptomatic infected individuals of middle age groups (25-59 years) increase to their saturation levels. The individuals aged 25-59 years could not have important contacts due to their jobs, but they use protection measures like face masks, social distancing, etc.  Therefore, the dynamics of infected individuals of age groups 25-59 years is very different from those of 10-24 years. The same reasoning is also applicable for the dynamics of symptomatic infected individuals (please refer to the right panel of Fig. \ref{age_prof}).
\subsection{Impact of Awareness of Symptomatic Individuals}
	From the expression \eqref{R01} (which represents the basic reproduction number $(R_0)$ of the single group $SEI_aI_sR$ epidemic model), we noticed that $R_0$ would decrease by decreasing the values of $\eta$, i.e., the increase in the awareness in symptomatic individuals implies an increase in the avoiding unnecessary contacts. Consequently, the disease burden reduces to a significant level. Since the expression of the basic reproduction of system \eqref{mod} has a complicated form. Therefore, it is not easy to discuss the impact of awareness in Symptomatic individuals on basic reproduction number $R_0$ (expressed in equation \eqref{R0}) of age-structured epidemic model \eqref{mod} explicitly. Therefore, we observe the effects of awareness of symptomatic individuals by time series analysis of total infected individuals with various levels of understanding of people in Fig.  \ref{0alphaawar} when schools are closed, i.e., $\alpha_s=0$ and Fig. \ref{05alphaawar} when schools are partially opened, i.e., $\alpha_s>0$. In Fig. \ref{0vary_eta_1_4}, we observe the effects of symptomatic infected individuals of age groups $0-19$ years and notice that the number of total infected individuals is reduced by $17\%$ (approximately) at the end of the year due to increases in the awareness of symptomatic infected individuals of age groups $0-19$ years by $75\%$.  From Fig. \ref{0vary_eta_5_9}, we obtain that the number of total infected individuals is reduced by $35\%$ (approximately) at the end of the year due to increases in the awareness of symptomatic infected individuals of age groups $20-49$ years by $100\%$.  In Fig. \ref{0vary_eta_10_16}, we observe that the changes in the number of total infected individuals are significantly less (approximately $10\%$ reduction) at the end of the year due to increases in the awareness of symptomatic infected individuals of age groups $50-79$ years by $100\%$. Approximately $40\%$ reduction in the number of the total infected population is noticed due to a $100\%$ increment in awareness of symptomatic infected individuals of age groups $20-79$ years (please refer the Fig. \ref{0vary_eta_5_16}).


\begin{figure}[H]
    \centering
    	\subfloat[]{\includegraphics[width=8cm,height=8cm]{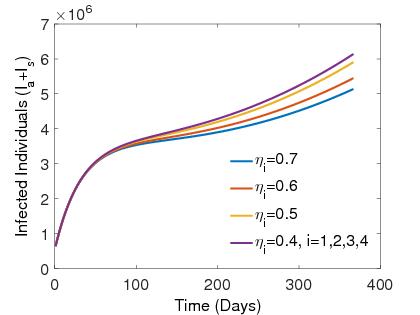}\label{0vary_eta_1_4}}\qquad
	\subfloat[]{\includegraphics[width=8cm,height=8cm]{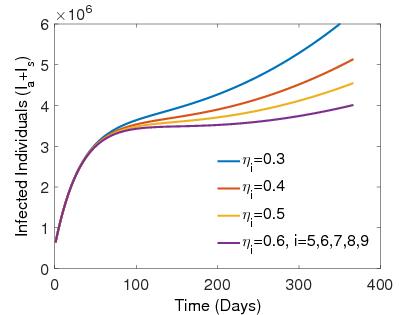}\label{0vary_eta_5_9}}\\
	\subfloat[]{\includegraphics[width=8cm,height=8cm]{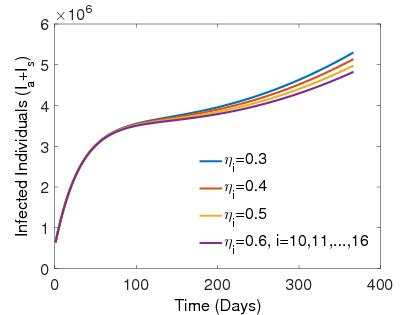}\label{0vary_eta_10_16}}\qquad
	\subfloat[]{\includegraphics[width=8cm,height=8cm]{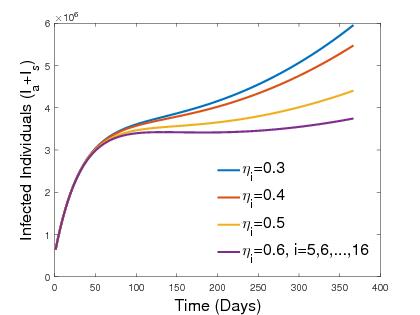}\label{0vary_eta_5_16}}
    \caption{The figure illustrates the impacts of awareness of symptomatic infected individuals on the long term dynamics of infected individuals when schools are closed $(\alpha_s=0).$}
    \label{0alphaawar}
\end{figure}

\begin{figure}[H]
    \centering
     	\subfloat[]{\includegraphics[width=8cm,height=8cm]{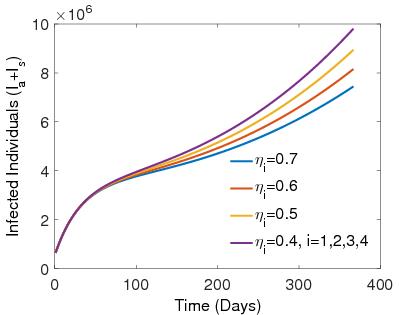}\label{05vary_eta_1_4}}\qquad
	\subfloat[]{\includegraphics[width=8cm,height=8cm]{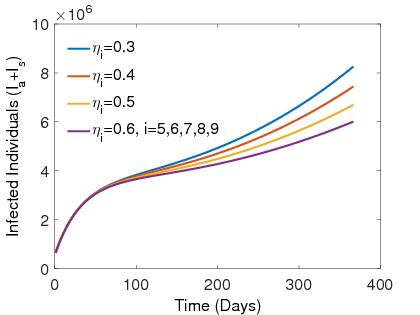}\label{05vary_eta_5_9}}\\
	\subfloat[]{\includegraphics[width=8cm,height=8cm]{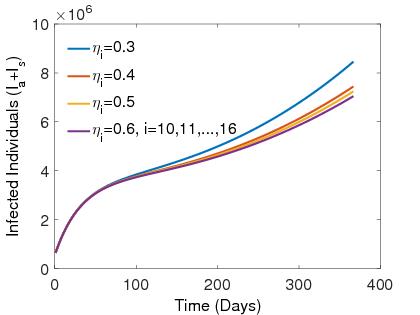}\label{05vary_eta_10_16}}\qquad
	\subfloat[]{\includegraphics[width=8cm,height=8cm]{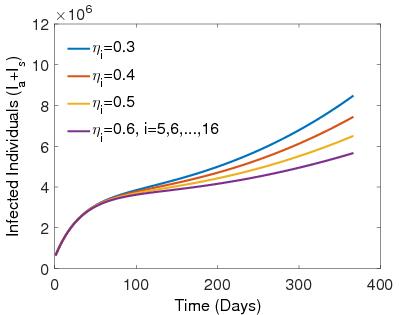}\label{05vary_eta_5_16}}
    \caption{The figure illustrates the impacts of awareness of symptomatic infected individuals on the long term dynamics of infected individuals when schools are partially  opened $(\alpha_s>0).$}
    \label{05alphaawar}
\end{figure}
From Fig. \ref{0vary_eta_1_4} and \ref{05vary_eta_1_4}, we have that the awareness of young people (with age $20-49$ yrs) is the most potent control measure to decrease the number of infected people when schools are partially opened or fully closed. Therefore, the awareness programs for people of age $20-49$ yrs should be exhibited by the authority to reduce the endemic level to a significant level when schools have opened partially or fully closed.
\subsection{Impact of Social mixing pattern }

\begin{figure}[H]
    \centering
   \subfloat[]{\includegraphics[width=8cm,height=8cm]{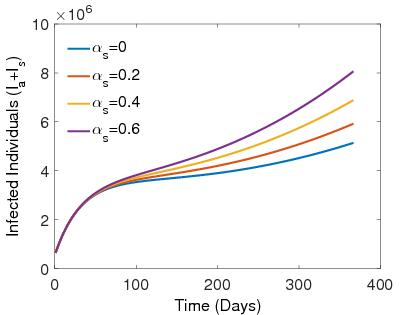}\label{vary_alpha_s}}
   \subfloat[]{\includegraphics[width=8cm,height=8cm]{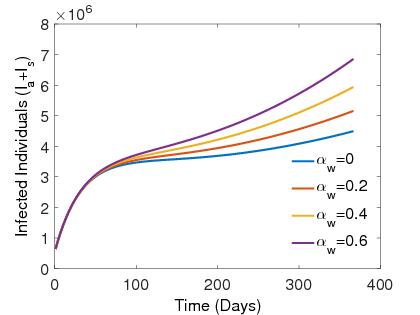}\label{vary_alpha_w}}\\
    \subfloat[]{\includegraphics[width=8cm,height=8cm]{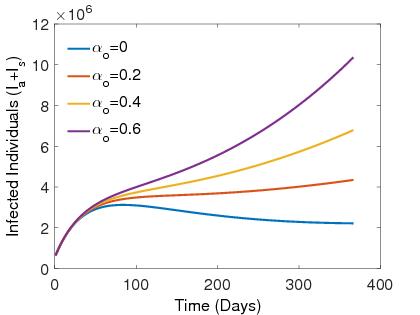}\label{vary_alpha_o}}
     \subfloat[]{\includegraphics[width=8cm,height=8cm]{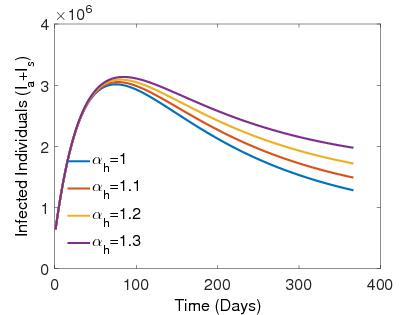}\label{vary_alpha_h}}
    \caption{The figure illustrates the impacts of social mixing pattern of people on the long term dynamics of infected individuals.}
    \label{socmix}
\end{figure}
In Fig. \ref{socmix}, we plotted long-term dynamics of total infected individuals for various social mixing scenarios. In the first three figures of  \ref{socmix} (Fig. \ref{vary_alpha_s}, \ref{vary_alpha_w} and \ref{vary_alpha_o}), we observe that the total number of infected individuals for four different levels of social contacts of a particular location and the weight coefficient of other three contacts matrices are fixed at a positive value.    In Fig. \ref{vary_alpha_s}, we observe that the effects of various levels of school reopening during the lockdown lifting when other three social contacts matrices are contributing to disease transmission, i.e., $\alpha_h>0,\alpha_w>0$ and $\alpha_o>0$ are taken in the figure.  Here, we notice that if $60\%$ schools are opened, the number of infected individuals is raised by approximately  $59\%$  from that of obtained in case of totally schools closure at the end of the year. In Fig. \ref{vary_alpha_w}, we noticed that about $53\%$ increment in the number of infected individuals is reported at the end of the year when  $60\%$  contacts of normal setting at working places are considered. In the end of the year, the huge amount of increment is reported in total number of infected individuals, when we increase the contact of other location up to $60\%.$ It can be seen in the Fig. \ref{vary_alpha_o}. From Fig. \ref{vary_alpha_h}, we have that as the value of $\alpha_h$ increase upto $30\%$ from $\alpha_h=1$, the total number of infected individuals also increases by $54\%.$ When a strict lockdown is implemented, people are more likely to stay at home for longer periods of time than in a normal scenario. As a result, the value of the weight coefficient for the home location $(\alpha_h)$ rises. Hence, the number of infected cases may rise. Thus , we can deduce that a long-term rigorous lockdown may be dangerous to those who stay home.
From Fig. \ref{socmix}, we conclude the changes in contacts at other location (shopping centers, cinema halls, restaurants, etc. ) also make significant changes in total number of infected population.  Hence, the number of infected people is controlled to a significant level by reducing the social contacts at other places (market, shopping malls, parks, etc.).

\section{Discussion}\label{con}
COVID-19, a contact-contagious infectious disease, is assumed to spread through a population via direct contacts between peoples \cite{gog2020,cha2020}. Outbreak control measures aimed at reducing the amount of mixing in the population can reduce the final size of the epidemic. Mathematical models help us to understand that how COVID-19  could spread across the population and inform control measures that might mitigate future transmission\cite{lan2020,chan2020}. 
The transmission dynamics of an infectious disease are most sensitive to the social contact patterns in a population of a particular community and to analyze the precautions people use to reduce the transmission of the disease. The social contact pattern depends on the age distribution of the specific community via different location such as  work, school and recreation etc. Therefore, knowing the age-specific prevalence and incidence of the infectious disease is essential for modeling the future burden of the disease and the effectiveness of interventions such as vaccination.
We simulated the shape of the ongoing outbreak of COVID-19 in India using an age-structured multi-group SEIR model. We accounted for various types of heterogeneities in contact networks in our model system \eqref{mod} to determine the effect of location-specific physical distancing measures such as extended school closures and interventions in workplaces and other gathering places on the disease progression. We simulated the COVID-19 outbreak and modeled the interventions by scaling down the appropriate component of the contact mixing matrices for India.  As individuals mixing patterns are non-random, they influence the disease transmission dynamics \cite{pre2020, dav2020}, models that evaluate the effectiveness of physical distancing interventions, such as school closure, need to account for social structures and heterogeneities in the mixing of individuals \cite{ wal2006, rea2003, edm1997}.

In our proposed model, we combined the changes to age-specific and location-specific social mixing patterns to estimate the effects of location-specific physical distancing interventions in controlling the transmission of the outbreak. The control measures put in place to reduce contacts at schools, workplaces, and other gathering places help us prevent the epidemic by providing the healthcare system sufficient time and opportunity to respond. Subsequently, if these restrictions are lifted early, there are still adequate susceptible people to keep the $R_0>1,$ once contacts increase and the number of infections grows. However, in realistic scenarios, the interventions are lifted slowly in a specific manner, partially, as an effort to avoid a sharp growth in infection due to logistical and practical reasons. Therefore, we simulated variously lifting the interventions such as school reopening in phase manners, market reopening, and many daily activities under the particular guidelines. There is little evidence of the effects of different physical distancing measures on containing the outbreak, is known about the behavioral changes of individuals over time during the outbreak. Therefore, we supposed the effect that certain types of physical distancing have on age-specific and location-specific contact rates to model the impact of the physical distancing measures executed in India.

To mathematical analysis, we have shown the existence of solutions for the proposed model. The nonnegativity and boundedness of solutions also have been discussed to demonstrate the feasibility in the context of population dynamics. The basic reproduction number $(R_0)$ has been computed using the next-generation matrix approach. We have studied the global dynamics of the system \eqref{mod} in terms of the basic reproduction number to check the behavior of the system around its equilibria. We found that the disease eliminates from all age groups when $R_0<1$ and the disease remains in the population when $R_0>1$. Hence, we obtained that the basic reproduction number plays the role of a threshold quantity to disease elimination from all age groups, and $R_0<1$ is a sufficient condition to eliminate COVID-19. The optimal control problem formulation has also been discussed by considering three non-pharmaceutical controls, namely: the first is the effort to awareness campaigns program regarding taking care of their health to protect from COVID-19, and second is the effort to encourage the exposed individuals to join the quarantine centers to break the transmission of COVID-19. The last control is the effort to make aware the infected individuals stay home in isolation or join the institutional isolation ward. The existence of optimal control has been discussed for the proposed optimal control problem \eqref{opt_con}. The optimal control to minimize the total number of infected population also has been computed mathematically.

Our numerical results exhibited that control measures aimed at reducing social mixing in the population can effectively reduce the disease burden of the COVID-19 outbreaks. For different scenarios of lifting the control measures among individuals, the schools and other places reopening have had a significant impact on the progression of the epidemic. We present the age profile of the number of asymptomatic and symptomatic infected individuals for one year (365 days) in Fig. \ref{age_prof}. Our simulation result suggests that the rate of change in the number of asymptomatic and symptomatic infected individuals vary in different categories; the number of asymptomatic infected individuals of age groups $5-24$ year and the number of symptomatic infected individuals of age groups $20-44$ year fastly increase. The solution trajectories of other age groups are going to an almost saturation level. Further, we discussed the effects of awareness of symptomatic infected individuals (refer Fig.\ref{0alphaawar}) when all schools are fully closed, i.e., $\alpha_s=0.$ Here, we conclude that the awareness of symptomatic infected individuals with age groups $20-49$ year is beneficial to reduce the number of infected individuals. It should be noticed that the large number of people involved in the essential services during the COVID-19 pandemic belongs to age groups $20-49$ year. The contacts of these people were contributing to the transmission of COVID-19 on a large scale. Therefore, the disease burden may have been reduced significantly by their awareness. It becomes essential that healthcare workers and other service providers are trained for awareness differently. However, the reduction in contacts of school-age children also reduces infection in the community when all or some schools are partially opened, i.e., $0<\alpha_s<1.$   ( refer the Fig. \ref{05alphaawar}). The impact of different scenarios of social mixing patterns of people on the dynamics of infected individuals has been shown in Fig \ref{socmix}.

Issues about schools opening during a pandemic originate from the possibility of higher incidence rates from school-related and other social activities. Parents are worried about diseases in children and teens, and this may bring the danger into the home, exacerbating the risk of community transmission. As a result, the increased community transmission may be attributable to the reopening of the school. The increased community transmission risk after school reopening can really be completely blamed on school-related activities. Our model confirms that the increase in contact among people returning to work, as well as the increase in social and recreational time, are also key causal drivers. Furthermore, after schools reopened, there was an unavoidable increase in contacts in adults, children, and adolescents, highlighting the significance of tight NPIs to allow for a safe reopening, despite the minor increase in cases in children and youth. 

The main findings of this study are that the reduction of contacts at schools and other locations (market,  shopping malls, and other gathering places) is an effective tool to decrease the number of infected individuals to a significant level. The school closure is also suggested as an intervention to reduce the COVID-19 in Wuhan, China \cite{pre2020}. The school reopening with limited attendance, such as older students sitting exams or the partial reopening to younger year groups, do not appear to affect the community spreading of COVID-19 significantly. However, fully school reopening can contribute to statistically significant increases in the transmission rate for COVID-19 \cite{sta2021,yua2022}. Face mask-wearing also should be necessary for the public and aware to avoid unnecessary contact with other people. In the case of schools closure, the awareness of symptomatic infected individuals with age groups $20-49$ years is helpful to significantly decrease the number of infected individuals due to their involvement in the essential services during the COVID-19 pandemic. However, In the case of schools partially opening, children's awareness is also required to reduce the number of infected people and the awareness of symptomatic infected individuals with age groups $20-49$ years. Similar results were also discussed for United States in \cite{mon2021}.

In conclusion, the non-pharmaceutical interventions based on sustained physical distancing have a strong potential to reduce the size of the epidemic burden of COVID-19 and lead to a smaller number of overall cases. The pressure on the healthcare system could be reduced by lowering and flattening the epidemic curve. The reduction in the disease prevalence also helps to reduce the load on the healthcare system.  The premature and sudden lifting of control measures could lead to other peaks, and the endemic curve could be flattened by gradually relaxing the interventions. Therefore, gradually relaxing the interventions is mandatory to reduce the disease burden in the community.

\newpage

\end{document}